\newcommand*\tcircle[1]{%
  \raisebox{-0.5pt}{%
    \textcircled{\fontsize{7pt}{0}\fontfamily{phv}\selectfont #1}%
  }%
}
\newtheorem{theorem}{Theorem}[section]
\newtheorem{proposition}[theorem]{Proposition}
\newtheorem{definition}[theorem]{Definition}
\title{Multi-Relational Structural Entropy}
\author[1]{\href{mailto:<ycao43@uic.edu>?Subject=Your UAI 2024 paper}{Yuwei Cao}{}}
\author[2]{Hao Peng \thanks{Corresponding author}}
\author[2]{Angsheng Li}
\author[3]{Chenyu You}
\author[4]{Zhifeng Hao}
\author[1]{Philip S. Yu}
\affil[1]{%
    University of Illinois Chicago\\
    Chicago, USA
}
\affil[2]{%
    Beihang University\\
    Beijing, China
} 
\affil[3]{%
    Yale University\\
    New Haven, USA
}
\affil[4]{%
    Shantou University\\
    Shantou, China
}
\begin{document}
\maketitle

\begin{abstract}
  Structural Entropy (SE) measures the structural information contained in a graph. Minimizing or maximizing SE helps to reveal or obscure the intrinsic structural patterns underlying graphs in an interpretable manner, finding applications in various tasks driven by networked data. However, SE ignores the heterogeneity inherent in the graph relations, which is ubiquitous in modern networks. In this work, we extend SE to consider heterogeneous relations and propose the first metric for multi-relational graph structural information, namely, multi-relational structural entropy (MrSE). 
  To this end, we first cast SE through the novel lens of the stationary distribution from random surfing, which readily extends to multi-relational networks by considering the choices of both nodes and relation types simultaneously at each step. The resulting MrSE is then optimized by a new greedy algorithm to reveal the essential structures within a multi-relational network.
  Experimental results highlight that the proposed MrSE offers a more insightful interpretation of the structure of multi-relational graphs compared to SE. Additionally, it enhances the performance of two tasks that involve real-world multi-relational graphs, including node clustering and social event detection.
  %We first show that SE can be interpreted using the stationary probability distribution vector obtained from random surfing. Subsequently, we introduce MrSE derived from random surfing on multi-relational networks, considering the choices of nodes and relation types simultaneously at each step. We illustrate how to decode the essential structures within multi-relational networks using our proposed MrSE. 
\end{abstract}

\section{Introduction}\label{sec:intro}
%\blfootnote{\textsuperscript{*}This is the corresponding author.}
\begin{figure}[ht]
\begin{center}
\centerline{\includegraphics[width=0.9\columnwidth]{./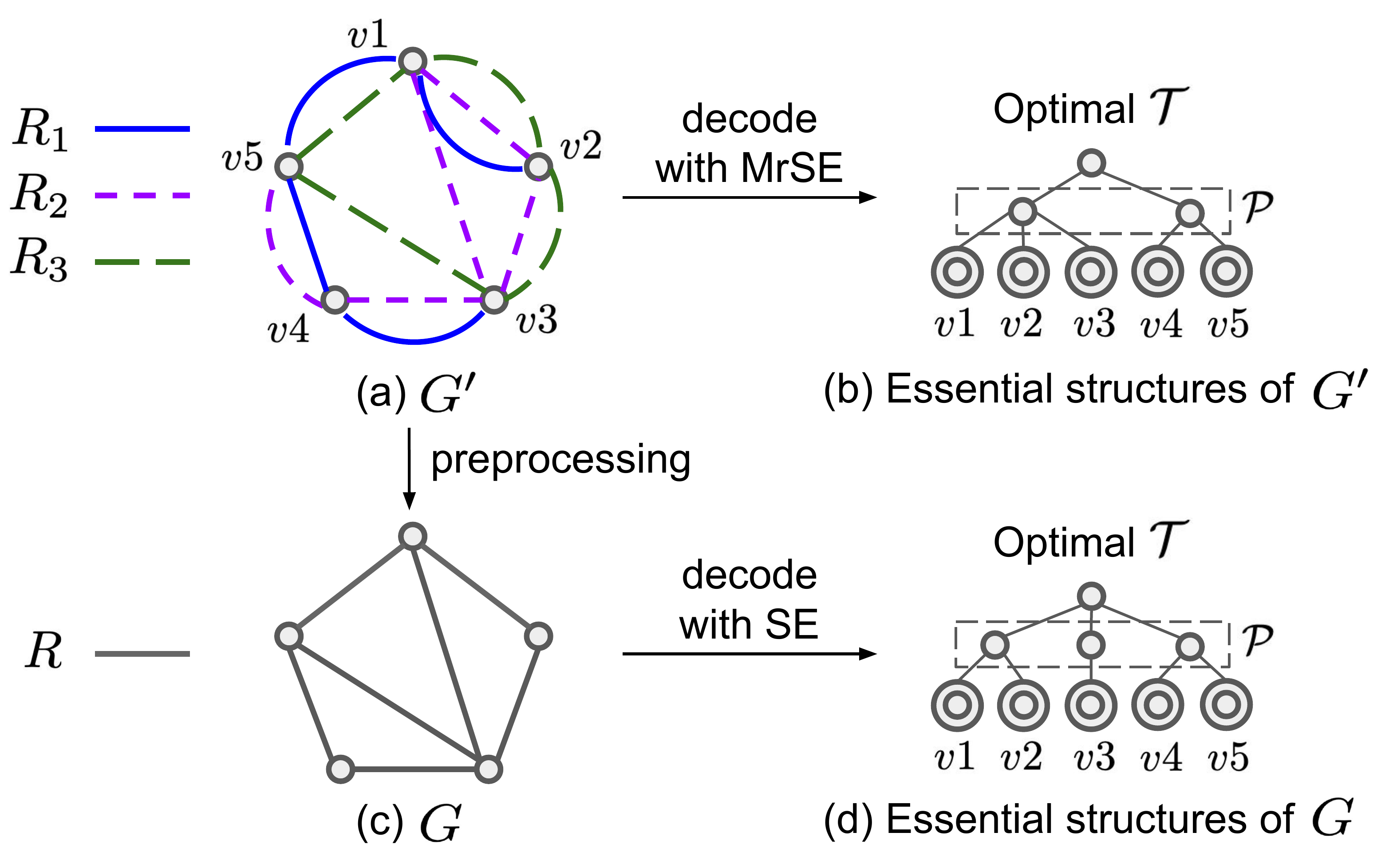}}
\caption{Decode the essential structures of a multi-relational graph with MrSE (ours) and SE. (a) is a multi-relational graph $G'$. (b) shows the essential structures of $G'$, decoded with MrSE. (c) is a single-relational graph $G$ reduced from $G'$. (d) shows the essential structures of $G$, decoded with SE.}
\label{figure:SE_vs_MrSE}
\end{center}
\end{figure}

In recent decades, graphs have become ubiquitous in our daily lives, with examples ranging from social networks and recommendation networks to publication networks, all effectively represented using graphs. Structural entropy (SE) \citep{li2016structural}, which measures the amount of structural information contained in a graph, provides a useful tool for graph analysis. Specifically, unlike various graph measures \citep{raychaudhury1984discrimination, braunstein2006laplacian, dehmer2008information, bianconi2009entropy} that are based on unstructured probability distributions, SE is interpretable \citep{liu2021bridging}. Minimizing or maximizing SE helps to disclose or obfuscate the essential structures underlying the raw, noisy graphs. Such favorable properties of SE lead to its recent applications in tasks including graph pooling \citep{wu2022structural}, community structure deception \citep{liu2019rem}, graph contrastive learning \citep{wu2023sega}, graph similarity measure \citep{liu2021bridging}, graph structure learning \citep{zou2023se}, network design \citep{liu2021bridging}, and social event detection \citep{cao2023hierarchical}.

However, SE assumes the existence of only a single type of relation between nodes, while in reality, graphs are multiplex \citep{de2013mathematical} in nature, incorporating heterogeneous relation types. For example, there may be multiple edges between two papers in a publication network, indicating shared authors, keywords, citations, accepted conferences, etc. As shown in Figures \ref{figure:SE_vs_MrSE}(a) and \ref{figure:SE_vs_MrSE}(c), analyzing multi-relational graphs with SE requires preprocessing them into single-relational graphs, which leads to information loss. This is due to the complementary nature of various relation types in revealing the graph's structure, with some relations being more informative than others. For instance, to determine if one paper is a follow-up study of another, examining the citations in addition to the keywords proves beneficial, while relying on accepted conferences may not provide as much insight. Therefore, it is essential to extend SE to consider multiple relation types.

%simultaneously consider all relation types 

In this work, we propose the first metric for multi-relational graph structural information, namely, multi-relational structural entropy (MrSE). Specifically, the original definition of SE measures the minimum number of bits required to determine the code of the node that is accessible with \textit{one step of random walk} on a single-relational graph $G$ \citep{li2016structural}, calculated from node degree statistics. Inspired by this, we propose to interpret SE with the stationary distribution vector obtained through \textit{random surfing} \citep{page1999pagerank}, i.e., taking an infinite long random walk, on $G$. Continuing with this idea, we then introduce the definition of the MrSE, incorporating random surfing on a multi-relational graph $G'$. During this multi-relational random surfing process, we simultaneously consider the choices of node and relation type at each step. The resulting stationary distribution vectors, one for nodes and one for relations, are used for MrSE calculation. We further illustrate how our proposed MrSE metric can be used to decode essential structures, such as communities, within $G'$. Through experiments on synthetic graphs, we demonstrate that our proposed MrSE outperforms SE in interpreting the structure of multi-relational graphs. Additionally, MrSE improves the performance of two real-world multi-relational graph tasks, namely node clustering and social event detection.

Our paper makes the following contributions:
\begin{itemize}
\item We introduce MrSE, the first metric designed to quantify the structural information within multi-relational graphs. Extending the favorable properties of SE, MrSE addresses heterogeneous relation types and serves as an improved tool for measuring and interpreting complex multi-relational graph structures.
\item We demonstrate how our proposed MrSE metric can decipher structures within multi-relational graphs. Introducing an algorithm for 2-dimensional (2D) MrSE minimization, we enable the detection of communities within multi-relational graphs.
%Introducing a greedy algorithm for 2-dimensional (2D) MrSE minimization, we enable the detection of communities within multi-relational graphs. Additionally, hierarchical graph partitioning can be integrated to expedite the algorithm.
\item Experiments on synthetic graphs with varying total numbers of relations, sizes, and sparsities demonstrate that our proposed MrSE, in comparison to SE, offers a more insightful interpretation of the structure of multi-relational graphs. Notably, a greater reduction is observed when employing MrSE for graph entropy minimization, indicating a more effective decoding of structural information. Furthermore, experiments on real-world multi-relational graph data show that MrSE improves the performance of two tasks, namely node clustering and social event detection.
\end{itemize}

\section{Related Works and Background}
%\section{Background}
\label{sec:prelim}
\textbf{Entropy-based Graph Metrics.} 
Measuring graph complexity is an important issue in graph analysis. To tackle this, various entropy-based graph measures \citep{raychaudhury1984discrimination, braunstein2006laplacian, dehmer2008information, bianconi2009entropy, li2016structural} have been proposed. Each of these measures represents a distinct form of Shannon entropy designed for different types of distributions extracted from the graphs. For example, the von Neumann graph entropy \citep{braunstein2006laplacian} is defined as the Shannon entropy of the Laplacian spectra. In contrast to previous metrics, SE quantifies the Shannon entropy of degree statistics, providing interpretability from an algebraic perspective. Meanwhile, all these metrics are designed for single-relational graphs. Hence, there is pressing need of a metric that can assess the complexity of multi-relational graphs.

%entropy measure in terms of structural patterns     degree sequence   a measure of degree heterogeneity    interpretability

\textbf{Structural Entropy (SE).}
%Given a single-relational graph $G = (V, E)$, where $V$ is a set of nodes and $E$ is a set of edges, SE \cite{li2016structural} is defined as the minimum number of bits to encode the vertex that is accessible with one step of random walk on $G$. SE corresponds to an \textit{encoding tree} that encodes the structure of $G$. SE exists in various dimensions, each assessing the structure of different orders and aligning with encoding trees of distinct heights. The formal definition of SE is:
Let $G = (V, E)$ be a single-relational graph, where $V$ is a set of nodes and $E$ is a set of edges. Assuming that the structure of $G$ can be represented with an \textit{encoding tree}, the formal definitions of the encoding tree and SE are as follows:

\begin{definition}
\label{definition:encoding_tree}
\citep{li2016structural} An encoding tree $\mathcal{T}$ is a tree that encodes a hierarchical partition of $V$. $\mathcal{T}$ satisfies: 1) Each node $\alpha$ in $\mathcal{T}$ is associated with a node subset $T_\alpha \subseteq V$. In particular, the root node $\lambda$ of $\mathcal{T}$ is associated with $V$, and any leaf node $\gamma$ in $\mathcal{T}$ is associated with a single node in $V$. 2) The node subsets associated with the children of $\alpha$ form a partition of $T_\alpha$. 3) Denote the height of $\alpha$ as $h(\alpha)$. Let $h(\gamma) = 0$ and $h(\alpha^-) = h(\alpha) + 1$, where $\alpha^-$ is the parent node of $\alpha$. $h(\mathcal{T}) = \underset{\alpha \in \mathcal{T}}{\max}\{h(\alpha)\}$ is the height of $\mathcal{T}$. 
\end{definition}

\begin{definition}
\label{definition:SE_kD}
\citep{li2016structural} Given a single-relational graph $G$ and an encoding tree $\mathcal{T}$, the structural entropy (SE) of $G$ relative to $\mathcal{T}$ is
\begin{equation}
\label{eq:SE_kD}
\mathcal{H}^\mathcal{T}(G) = -\underset{\alpha\in\mathcal{T}, \alpha\neq\lambda}{\sum}\frac{g_\alpha}{\mathrm{vol}(T_\lambda)}\log\frac{\mathrm{vol}(T_\alpha)}{\mathrm{vol}(T_{\alpha^-})},
\end{equation}
where $g_\alpha$ is the summation of the degrees of the cut edges of $T_\alpha$, i.e., edges in $E$ that have exactly one endpoint in $T_\alpha$. For a directed $G$, $g_\alpha$ is the summation of the in-degrees of the nodes in $T_\alpha$. $\mathrm{vol}(\cdot)$ stands for the volume, i.e., the sum of the (in-)degrees, of the associated node subset. E.g., $\mathrm{vol}(T_\alpha)$, $\mathrm{vol}(T_{\alpha^-})$, and $\mathrm{vol}(T_\lambda)$ refer to the volume of $T_\alpha$, $T_{\alpha^-}$, and $V$, respectively.
\end{definition}

Note that an encoding tree $\mathcal{T}$ is essentially a description of a graph’s structure.
%Note that an encoding tree $\mathcal{T}$ can be defined without considering the edges $E$ in a graph. 
Meanwhile, the SE values reveal how well $\mathcal{T}$ captures the structures of the graph. $\mathcal{H}^{(k)}(G)$, the $k$-dimensional SE of $G$, is defined as $\mathcal{H}^\mathcal{T}(G)$ that associated with a $\mathcal{T}$ that satisfies $h(\mathcal{T}) = k$. For $k=1$, the 1-dimensional (1D) SE, $\mathcal{H}^{(1)}(G)$, is equivalent to the Shannon entropy of the degree heterogeneity. $\mathcal{H}^{(1)}(G)$ is associated with a unique $\mathcal{T}$ of height 1, measuring the intrinsic information within $G$ without making assumptions about higher-order structures, such as communities.
For $k>1$, minimizing or maximizing $\mathcal{H}^{(k)}(G)$ is equivalent to seeking a $\mathcal{T}$ of height $k$ that reveals or hides the $k$-dimensional structures within $G$. E.g., Figures \ref{figure:SE_vs_MrSE}(c) and \ref{figure:SE_vs_MrSE}(d) visualize how minimizing the 2-dimensional (2D) SE reveals the 2D structures, i.e., communities, within $G$. The resulting $\mathcal{T}$ is optimal, i.e., corresponds to the minimum 2D SE. $\mathcal{P} = \{\alpha|\alpha \in \mathcal{T}, h(\alpha) = 1\}$ forms a partition of $V$ that highlight the communities within $G$. We provide more examples of encoding trees and the 2D SE minimization process in Appendix \ref{app:example_SE}.

SE has found various applications \citep{wu2022structural, wu2023sega, liu2021bridging, liu2019rem, li2016three, li2018decoding, cao2023hierarchical, peng2024unsupervised, zou2024multispans, zeng2024adversarial, zeng2023unsupervised, zeng2023effective, zeng2023hierarchical}. However, SE assumes the existence of only a single type of relation between nodes. This limitation calls for an improved SE measure that addresses the widespread heterogeneity of relation types.

%We denote the adjacency matrix of $G$ as $\textbf{A} \in \mathbb{R}_{+}^{|V|\times |V|}$, with entry $\textbf{A}_{i, j}$ equals the weight of the edge that starts from $j \in V$ and ends at $i \in V$. 
%For an undirected $G$, $\textbf{A}$ is symmetric. Otherwise, $\textbf{A}$ is asymmetric.

%1D SE provides an accurate, salable, and interpretable approximation of the von Neumann graph entropy. network design and graph similarity measure.

\section{Multi-relational Structural Entropy (MrSE)}
\label{sec:MrSE}
In this section, we first interpret SE from the perspective of random surfing (Section \ref{sec:RS_to_SE}). Following this intuition, we then draw inspiration from random surfing on multi-relational networks and derive the multi-relational structural entropy (MrSE) measure (Section \ref{sec:MRS_to_MrSE}). Finally, we show how to decode the structures within a multi-relational graph by minimizing the proposed MrSE (Section \ref{sec:decode_MrSE}). Appendix \ref{app:notations} summarizes the notations used in this work.

\subsection{A Random Surfing-based Interpretation of SE}
\label{sec:RS_to_SE}

As shown in Definition \ref{definition:SE_kD}, the original definition of SE is based on degree statistics. Meanwhile, we observe that the degree heterogeneity of $G$ resembles the stationary probability vector resulting from \textit{random surfing} \citep{page1999pagerank} on $G$. Leveraging this, we interpret Definition \ref{definition:SE_kD} through the lens of random surfing, as outlined below.

We denote the adjacency matrix of $G$ as $\textbf{A} \in \mathbb{R}_{+}^{|V|\times |V|}$, with entry $\textbf{A}_{j, i}$ equal the weight of the edge that starts from node $i \in V$ and ends at node $j \in V$.
We assume that $G$ is strongly connected, i.e., $\textbf{A}$ is irreducible. A surfer randomly starts from node $i \in V$. At each step, the surfer randomly steps into a neighboring node in $\{j|\textbf{A}_{j, i} \not= 0\}$. The surfing process can thus be seen as a Markov chain with transition probability matrix $\tilde{\textbf{A}}$, where $\tilde{\textbf{A}}$ is acquired from column-normalizing $\textbf{A}$ such that $\forall i, \sum_{j=1}^{|V|}\tilde{\textbf{A}}_{j, i} = 1$ holds. 
Let $\textbf{x} \in \mathbb{R}_{+}^{|V|}$ be the stationary distribution, i.e., a probability distribution that indicates where the surfer is likely to be after an infinitely long walk. $\textbf{x}$ satisfies $\textbf{x} = \tilde{\textbf{A}}\textbf{x}$ and can be calculated using the Power Method \citep{journee2010generalized}.
With $\tilde{\textbf{A}}$ and $\textbf{x}$, Equation (\ref{eq:SE_kD}) can be rewritten as:
\begin{equation}
\label{eq:SE_kD_RS}
\mathcal{H}^\mathcal{T}(G) = -\underset{\alpha\in\mathcal{T}, \alpha\neq\lambda}{\sum}p_{\rightarrow\alpha}\log\frac{p_\alpha}{p_{\alpha^-}},
\end{equation}
where $p_{\rightarrow\alpha} = {\sum}_{i \in V\backslash T_{\alpha}}\textbf{x}_i{\sum}_{j \in T_{\alpha}}\tilde{\textbf{A}}_{j,i}$, $p_\alpha = {\sum}_{i \in T_{\alpha}}\textbf{x}_i$, and $p_{\alpha^-} = {\sum}_{i \in T_{\alpha^-}}\textbf{x}_i$. 

Particularly, the 1D SE of $G$ is rewritten as $\mathcal{H}^{(1)}(G) = -\overunderset{|V|}{i = 1}{\sum}\:\textbf{x}_i\log\textbf{x}_i$, which measures the intrinsic information contained in $G$.

\begin{proposition}
Equation (\ref{eq:SE_kD}) and Equation (\ref{eq:SE_kD_RS}) give the same definition of $\mathcal{H}^\mathcal{T}(G)$.
\end{proposition}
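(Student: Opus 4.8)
The plan is to show that, under the strong-connectivity assumption on $G$, the stationary vector $\textbf{x}$ is exactly the normalized degree vector of $G$, and then to substitute this identity into Equation~(\ref{eq:SE_kD_RS}) and match it against Equation~(\ref{eq:SE_kD}) summand by summand. Write $d_i = \sum_j \textbf{A}_{j,i}$ for the (weighted, in-)degree of node $i$, so that $\mathrm{vol}(T_\alpha) = \sum_{i \in T_\alpha} d_i$. The first step is to verify that the vector $\textbf{d}$ with entries $\textbf{d}_i = d_i / \mathrm{vol}(T_\lambda)$ is a fixed point of $\tilde{\textbf{A}}$: expanding $(\tilde{\textbf{A}}\textbf{d})_j = \sum_i (\textbf{A}_{j,i}/d_i)(d_i/\mathrm{vol}(T_\lambda)) = \mathrm{vol}(T_\lambda)^{-1}\sum_i \textbf{A}_{j,i}$, which equals $\textbf{d}_j$ precisely when the $j$-th row sum of $\textbf{A}$ equals $d_j$, i.e.\ in the undirected case where $\textbf{A}$ is symmetric. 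Since $\textbf{A}$ is irreducible, $\tilde{\textbf{A}}$ is a column-stochastic irreducible matrix, and Perron--Frobenius yields a unique stationary distribution; hence $\textbf{x} = \textbf{d}$, i.e.\ $\textbf{x}_i = d_i/\mathrm{vol}(T_\lambda)$ for all $i$.

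With this identity in hand, the three quantities in Equation~(\ref{eq:SE_kD_RS}) rewrite immediately. Summing over a tree node's subset, $p_\alpha = \sum_{i \in T_\alpha}\textbf{x}_i = \mathrm{vol}(T_\alpha)/\mathrm{vol}(T_\lambda)$ and $p_{\alpha^-} = \mathrm{vol}(T_{\alpha^-})/\mathrm{vol}(T_\lambda)$, so the common factor cancels inside the logarithm and $\log(p_\alpha/p_{\alpha^-}) = \log(\mathrm{vol}(T_\alpha)/\mathrm{vol}(T_{\alpha^-}))$. For the transition term, $p_{\rightarrow\alpha} = \sum_{i \in V\setminus T_\alpha}\textbf{x}_i\sum_{j \in T_\alpha}\tilde{\textbf{A}}_{j,i} = \mathrm{vol}(T_\lambda)^{-1}\sum_{i \in V\setminus T_\alpha}\sum_{j \in T_\alpha}\textbf{A}_{j,i}$, and the double sum is exactly the total weight of edges having one endpoint outside $T_\alpha$ and one inside --- that is, $g_\alpha$ --- so $p_{\rightarrow\alpha} = g_\alpha/\mathrm{vol}(T_\lambda)$. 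Substituting these into Equation~(\ref{eq:SE_kD_RS}) reproduces Equation~(\ref{eq:SE_kD}) term by term over $\alpha \in \mathcal{T}\setminus\{\lambda\}$; specializing to height-one leaves $\alpha = \{i\}$ (with no self-loops, so that $g_{\{i\}} = d_i$) additionally recovers $\mathcal{H}^{(1)}(G) = -\sum_i \textbf{x}_i\log\textbf{x}_i$.

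The main obstacle is the identification $\textbf{x}_i = d_i/\mathrm{vol}(T_\lambda)$ in the first step. It is essentially free for undirected graphs because $\textbf{A}$ is symmetric, but for a genuinely directed $G$ the column-normalized random walk need not have the in-degree distribution as its stationary vector, so one must either restrict to the (symmetric) setting in which SE was originally formulated or read Equation~(\ref{eq:SE_kD_RS}) as the operative definition in the directed case. Everything else is bookkeeping: checking that the index set $V\setminus T_\alpha$ lines up with the definition of the cut edges of $T_\alpha$, and noting that irreducibility makes every volume strictly positive so that all logarithms are well defined.
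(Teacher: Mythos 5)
Your proof is correct and follows essentially the same route as the paper: both arguments reduce the claim to the identity $\textbf{x}_i = d_i/\mathrm{vol}(T_\lambda)$ and then match $p_{\rightarrow\alpha}$ with $g_\alpha/\mathrm{vol}(T_\lambda)$ and $\log(p_\alpha/p_{\alpha^-})$ with $\log\bigl(\mathrm{vol}(T_\alpha)/\mathrm{vol}(T_{\alpha^-})\bigr)$ term by term. The only difference is that you verify that identity explicitly (fixed-point check plus Perron--Frobenius, with the honest caveat about directed $G$), whereas the paper's proof invokes it implicitly through the probabilistic identifications $P(i)=\textbf{x}_i$ and $P(i,j)=\textbf{x}_i\tilde{\textbf{A}}_{j,i}$, so your write-up is a slightly more rigorous rendering of the same computation.
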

\begin{proof} 
For the first multiplicand on the RHS of Equation~(\ref{eq:SE_kD}), we have
\begin{equation}
\label{eq:derive_SE_RSSE_1}
\begin{split}
\frac{g_\alpha}{\mathrm{vol}(T_\lambda)} & = {\sum}_{i \in V\backslash T_{\alpha}}{\sum}_{j \in T_{\alpha}}P(i,j) \\
& = {\sum}_{i \in V\backslash T_{\alpha}}P(i){\sum}_{j \in T_{\alpha}}P(j|i) \\
& = {\sum}_{i \in V\backslash T_{\alpha}}\textbf{x}_i{\sum}_{j \in T_{\alpha}}\tilde{\textbf{A}}_{j,i} \\
& = p_{\rightarrow\alpha}.
\end{split}
\end{equation}
Additionally, for the second multiplicand on the RHS of Equation (\ref{eq:SE_kD}), we have
\begin{equation}
\begin{split}
\log\frac{\mathrm{vol}(T_\alpha)}{\mathrm{vol}(T_{\alpha^-})} & = \log\frac{\mathrm{vol}(T_\alpha)}{\mathrm{vol}(T_\lambda)} - \log\frac{\mathrm{vol}(T_{\alpha^-})}{\mathrm{vol}(T_\lambda)}\\
& = \log({\sum}_{i \in T_{\alpha}}\textbf{x}_i) - \log({\sum}_{i \in T_{\alpha^-}}\textbf{x}_i)\\
& = \log\frac{p_\alpha}{p_{\alpha^-}},
\end{split}
\end{equation}

which concludes the proof.
\end{proof}

The assumption of strong connectivity for $G$ may be violated in certain situations. In such cases, \textit{stochasticity adjustment} \citep{page1999pagerank} is required to transform $G$ into a strongly connected graph.
Specifically, we replace all zero columns in $\tilde{\textbf{A}}$ with $1/|V|\textbf{e}$, where $\textbf{e}$ is a vector of ones. In addition, we make \textit{primitivity adjustment} \citep{page1999pagerank} to decrease the number of iterations needed for the Power Method to converge. Specifically, we replace $\tilde{\textbf{A}}$ with a new transition matrix $\textbf{B} = c\tilde{\textbf{A}} + (1-c)\textbf{E}$, where $(1-c)$ is the probability for the surfer to teleport to a random node and $\textbf{E}$ is the teleportation matrix. We set $\textbf{E}$ to $1/|V|\textbf{e}\textbf{e}^\top$ and $c$ to $0.85$ following \citep{page1999pagerank}. We note, nonetheless, that the selection of $c$ requires balancing two demands: 1) $c$ is small enough so that the Power Method converges fast and 2) $c$ is reasonably large so that $G$ is not over-modulated and its intrinsic structural information is kept. We propose to explore the best strategy for choosing $c$ in the future.

%we introduce a damping factor $c$ such that at each step, the surfer follows the transition matrix with a probability of $c$ and teleport to a random node with probability $(1-c)$. The adjusted surfing process is referred to as random walk with restart (RWR) \cite{page1999pagerank}. Formally, RWR replaces the transition matrix $\tilde{\textbf{A}}$ with a new transition matrix $\textbf{B} = c\tilde{\textbf{A}} + (1-c)\textbf{E}$, where $\textbf{E}$ is the teleportation matrix. We set $\textbf{E}$ to $1/|V|\textbf{e}\textbf{e}^\top$ and set $c$ to $0.85$ following the well-known PageRank \cite{page1999pagerank}, while there are various other RWR algorithms \cite{gleich2006approximating, tong2008random, shin2015bear, wang2017fora, yoon2018tpa} available. We note, nonetheless, that the selection of $c$ requires balancing two demands: 1) $c$ is small enough so that the Power Method converges fast and 2) $c$ is reasonably large so that $G$ is not over-modulated and its intrinsic structural information is kept. We propose to explore the best strategy of choosing $c$ in the future.

\subsection{From Multi-relational Random Surfing to MrSE}
\label{sec:MRS_to_MrSE}
Following the intuitions in Section \ref{sec:RS_to_SE}, we derive the first metric for multi-relational graph structural information, i.e., multi-relational structural entropy (MrSE), based on random surfing on multi-relational networks. 

We denote a multi-relational network as $G' = (V, E', R)$, where $V$, $E'$, and $R$ stand for the node, edge, and relation sets of $G'$, respectively.
%Each edge in $E'$ is associated with a relation $r \in R$. 
The adjacency tensor of $G'$ is $\textbf{A}' \in \mathbb{R}_{+}^{|V|\times |V|\times|R|}$, with entry $\textbf{A}'_{i,j,r}$ equals the weight of the edge that starts from $j \in V$, ends at $i \in V$, and associates with relation $r \in R$. At each step of the multi-relational surfing, the surfer jointly considers which neighboring \textit{node} to visit and which \textit{relation} to use. We provide examples of $G'$, $\textbf{A}'$, and multi-relational random surfing in Appendix \ref{app:example_MrSE}. Inspired by \citep{ng2011multirank}, we use two transition matrices, denoted as $\mathcal{V}$ and $\mathcal{R}$, to model the choices of the neighboring node and the relation, respectively. We assume that $\textbf{A}'$ is irreducible \citep{ng2011multirank}, i.e., 
%for any fixed $r$, the $|V|\times|V|$ matrix $(\textbf{A}'_{i,j,r})$ is irreducible. 
for any fixed $r$, a slice of $\textbf{A}'$, $(\textbf{A}'_{i,j,r}) \in \mathbb{R}_{+}^{|V|\times |V|}$ is irreducible. 
$\mathcal{V}$ and $\mathcal{R}$ are constructed as $\mathcal{V}_{i,j,r} = \textbf{A}'_{i,j,r}/\overunderset{|V|}{i = 1}\sum\textbf{A}'_{i,j,r}$ and $\mathcal{R}_{i,j,r} = \textbf{A}'_{i,j,r}/\overunderset{|R|}{r = 1}\sum\textbf{A}'_{i,j,r}$, respectively.
%\begin{equation}
%\mathcal{V}_{i,j,r} = \textbf{A}'_{i,j,r}/\overunderset{|V|}{i = 1}\sum\textbf{A}'_{i,j,r}\;\;\text{and}\;\;\mathcal{R}_{i,j,r} = \textbf{A}'_{i,j,r}/\overunderset{|R|}{r = 1}\sum\textbf{A}'_{i,j,r}.
%\end{equation}
Let $\textbf{x}' \in \mathbb{R}_{+}^{|V|}$ and $\textbf{y} \in \mathbb{R}_{+}^{|R|}$ be two probability distributions that tell us which node the surfer is likely to visit and which relation the surfer is likely to use at each step, respectively. After an infinitely long walk on $G'$, $\textbf{x}'$ and $\textbf{y}$ would converge to two stationary distributions that satisfy 
%$\mathcal{V}\textbf{x}'\textbf{y}=\textbf{x}'$ and $\mathcal{R}\textbf{x}'\textbf{x}'=\textbf{y}$, 
$\overunderset{|R|}{r = 1}\sum\overunderset{|V|}{j = 1}\sum\mathcal{V}_{i,j,r}\textbf{x}^{'}_{j}\textbf{y}_{r}=\textbf{x}^{'}_{i}$ and $\overunderset{|V|}{i = 1}\sum\overunderset{|V|}{j = 1}\sum\mathcal{R}_{i,j,r}\textbf{x}^{'}_{j}\textbf{x}^{'}_{i}=\textbf{y}_{r}$, 
respectively. $\textbf{x}'$ and $\textbf{y}$ can be calculated using the MultiRank algorithm \citep{ng2011multirank}.
%\begin{equation}
%\label{eq:MRS}
%\mathcal{V}\textbf{x}'\textbf{y}=\textbf{x}'\;\;and\;\;\mathcal{R}\textbf{x}'\textbf{x}'=\textbf{y}.
%\end{equation}
With $\mathcal{V}$, $\textbf{x}'$, and $\textbf{y}$, we introduce the definition of MrSE as follows.

\begin{definition}
\label{definition:MrSE_kD_RS}
%Given an encoding tree $\mathcal{T}$, node transition matrix $\mathcal{V}$, node and relation stationary probability distributions $\textbf{x}'$ and $\textbf{y}$ resulted from multi-relational random surfing on $G'$, the MrSE of $G'$ relative to $\mathcal{T}$ is

Given a multi-relational graph $G'$, and an encoding tree $\mathcal{T}$. Assume we have the node and relation stationary distributions $\textbf{x}'$ and $\textbf{y}$ acquired from multi-relational random surfing on $G'$ following node and relation transition matrices $\mathcal{V}$ and $\mathcal{R}$. The multi-relational structural entropy (MrSE) of $G'$ relative to $\mathcal{T}$ is
\begin{equation}
\label{eq:MrSE_kD_RS}
\mathcal{H}^\mathcal{T}(G') = -\underset{\alpha\in\mathcal{T}, \alpha\neq\lambda}{\sum}p'_{\rightarrow\alpha}\log\frac{p'_\alpha}{p'_{\alpha^-}},
\end{equation}
where $p'_{\rightarrow\alpha} 
= {\sum}_{i \in V\backslash T_{\alpha}}\textbf{x}'_i{\sum}_{j \in T_{\alpha}}{\sum}_{r \in R}{\mathcal{V}}_{j,i,r}\textbf{y}_r$, $p'_\alpha = {\sum}_{i \in T_{\alpha}}\textbf{x}'_i$, and $p'_{\alpha^-} = {\sum}_{i \in T_{\alpha^-}}\textbf{x}'_i$.
\end{definition}
Particularly, the 1D MrSE of $G'$, $\mathcal{H}^{(1)}(G') = -\overunderset{|V|}{i = 1}{\sum}\:\textbf{x}'_i\log\textbf{x}'_i$, measures the intrinsic information contained in $G'$.

\begin{proposition}
The probabilistic interpretations of SE and MrSE are identical. 
\end{proposition}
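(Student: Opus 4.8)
The plan is to make explicit the ``probabilistic interpretation'' that is only implicit in Equations~(\ref{eq:SE_kD_RS}) and~(\ref{eq:MrSE_kD_RS}), and then to observe that MrSE instantiates that same interpretation with a different, but structurally identical, random-surfing process. Concretely, in the single-relational case $\mathcal{H}^\mathcal{T}(G)$ reads as an expected Shannon code length: $p_\alpha/p_{\alpha^-}$ is the stationary probability that the surfer sits in $T_\alpha$ \emph{conditioned} on sitting in $T_{\alpha^-}$, so $-\log(p_\alpha/p_{\alpha^-})$ is the number of bits needed to single out the child $\alpha$ among the children of $\alpha^-$; and $p_{\rightarrow\alpha}$ is the stationary probability that one surfing step crosses the boundary of $T_\alpha$ from outside, i.e.\ the frequency with which that description cost is incurred. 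Summing over all non-root tree nodes yields the expected number of bits to locate the surfer within the hierarchy encoded by $\mathcal{T}$. I would first state this reading precisely for $G$, emphasizing that it depends on exactly two objects: the stationary distribution $\textbf{x}$ over $V$ and the one-step transition kernel $\tilde{\textbf{A}}$ on $V$.

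The key step is then to exhibit, for the multi-relational process, an ``effective'' one-step transition kernel on $V$ that plays the role of $\tilde{\textbf{A}}$. Define $P'_{j,i} := {\sum}_{r\in R}\mathcal{V}_{j,i,r}\,\textbf{y}_r$, the probability of moving from node $i$ to node $j$ in one multi-relational step after marginalizing over the relation choice distributed as the stationary relation vector $\textbf{y}$. I would then check: (i) that $P'$ is genuinely column-stochastic, since for each source $i$, ${\sum}_j P'_{j,i} = {\sum}_r \textbf{y}_r {\sum}_j \mathcal{V}_{j,i,r} = {\sum}_r \textbf{y}_r = 1$, using that $\mathcal{V}$ is normalized over target nodes for each fixed (source, relation) pair and that $\textbf{y}$ is a probability vector; and (ii) that $\textbf{x}'$ is stationary for $P'$, since the first MultiRank fixed-point equation $\sum_r\sum_j \mathcal{V}_{i,j,r}\textbf{x}'_j\textbf{y}_r = \textbf{x}'_i$ is precisely $P'\textbf{x}' = \textbf{x}'$. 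With these two facts, the quantities in Definition~\ref{definition:MrSE_kD_RS} unwind to $p'_{\rightarrow\alpha} = {\sum}_{i\in V\backslash T_\alpha}\textbf{x}'_i{\sum}_{j\in T_\alpha}P'_{j,i}$, $p'_\alpha = {\sum}_{i\in T_\alpha}\textbf{x}'_i$, and $p'_{\alpha^-} = {\sum}_{i\in T_{\alpha^-}}\textbf{x}'_i$ --- that is, the stationary probability that one step enters $T_\alpha$, and the stationary masses of $T_\alpha$ and $T_{\alpha^-}$, all with respect to the pair $(\textbf{x}', P')$.

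Finally I would conclude via the correspondence $\tilde{\textbf{A}}\leftrightarrow P'$, $\textbf{x}\leftrightarrow\textbf{x}'$: since $(\textbf{x}', P')$ is again a stationary-distribution / transition-kernel pair on $V$, the bit-counting reading of $\mathcal{H}^\mathcal{T}(G)$ applies verbatim to $\mathcal{H}^\mathcal{T}(G')$, with $G'$'s multi-relational surfer in place of $G$'s surfer; as a consistency check, the height-one case collapses to $-{\sum}_i \textbf{x}'_i\log\textbf{x}'_i$, mirroring $-{\sum}_i \textbf{x}_i\log\textbf{x}_i$. The main obstacle is step two: verifying that marginalizing the relation choice yields a bona fide stochastic matrix whose stationary vector is exactly the MultiRank node score $\textbf{x}'$. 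This requires care with the two distinct normalization conventions ($\mathcal{V}$ normalized over target nodes, $\mathcal{R}$ over relations) and with the fact that $P'$ itself depends on $\textbf{y}$, which is in turn coupled to $\textbf{x}'$ through the second MultiRank equation; once that coupling is read off at the fixed point, the remaining correspondence is immediate.
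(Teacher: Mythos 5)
Your proof is correct and follows the same core idea as the paper's: both reduce the multi-relational step to an effective single-relational step by marginalizing the relation choice against the stationary distribution $\textbf{y}$ (the paper writes this as $P(j|i)=\sum_{r}P(j|i,r)P(r)$ with $P(i)=\textbf{x}'_i$ and $P(r)=\textbf{y}_r$), and then read off the same entering-probability and code-length interpretation of $p'_{\rightarrow\alpha}$ and $\log\frac{p'_\alpha}{p'_{\alpha^-}}$. Your version adds two checks the paper leaves implicit --- that $P'_{j,i}=\sum_{r}\mathcal{V}_{j,i,r}\textbf{y}_r$ is column-stochastic and that the first MultiRank fixed-point equation is exactly $P'\textbf{x}'=\textbf{x}'$ --- which is a welcome tightening, since it justifies treating $(\textbf{x}',P')$ as a genuine stationary-distribution/transition-kernel pair before invoking the single-relational reading verbatim.
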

\begin{proof}
Both $p'_{\rightarrow\alpha}$ in Equation (\ref{eq:MrSE_kD_RS}) and $p_{\rightarrow\alpha}$ in Equation (\ref{eq:SE_kD_RS}) essentially mean the probability of entering community $T_\alpha$. We have
\begin{equation}
\begin{split}
p'_{\rightarrow\alpha} & = {\sum}_{i \in V\backslash T_{\alpha}}{\sum}_{j \in T_{\alpha}}P(i,j) \\
& = {\sum}_{i \in V\backslash T_{\alpha}}P(i){\sum}_{j \in T_{\alpha}}P(j|i)\\
& = {\sum}_{i \in V\backslash T_{\alpha}}P(i){\sum}_{j \in T_{\alpha}}{\sum}_{r \in R}P(j|i,r)P(r) \\
& = {\sum}_{i \in V\backslash T_{\alpha}}\textbf{x}'_i{\sum}_{j \in T_{\alpha}}{\sum}_{r \in R}{\mathcal{V}}_{j,i,r}\textbf{y}_r,
\end{split}
\end{equation}
so it aligns with the probabilistic interpretation of $p_{\rightarrow\alpha}$ as shown in Equation (\ref{eq:derive_SE_RSSE_1}).

Additionally, $p'_\alpha$ and $p'_{\alpha^-}$ in Equation (\ref{eq:MrSE_kD_RS}) stand for the probabilities of the surfer being in communities $T_\alpha$ and $T_{\alpha^-}$, respectively. Since $T_\alpha \subset T_{\alpha^-}$, the surfer has to be already in community $T_{\alpha^-}$ before they can enter $T_\alpha$. Therefore, $\log\frac{p'_\alpha}{p'_{\alpha^-}} = \log{p'_\alpha} - \log{p'_{\alpha^-}}$ is the amount of \textit{new} information, measured in bits, contained in entering $T_\alpha$. Similarly, $\log\frac{p_\alpha}{p_{\alpha^-}}$ in Equation (\ref{eq:SE_kD_RS}) also stands for the new information contained in entering $T_\alpha$.

Consequently, SE measures the amount of information contained in one step of random walk on a single-relational $G$, while MrSE is the multi-relational counterpart of SE. MrSE and SE share the same probabilistic interpretation.
\end{proof}

In the case that $\textbf{A}'$ is reducible, we need to adjust $G'$ to ensure that $\textbf{x}'$ and $\textbf{y}$ converge. Specifically, we make stochasticity adjustments to $\mathcal{V}$ and $\mathcal{R}$ such that $\forall (i,r), \sum_{j=1}^{|V|}\mathcal{V}_{j,i,r} = 1$ and $\forall (j,i), \sum_{r=1}^{|R|}\mathcal{R}_{j,i,r} = 1$. Additionally, for faster convergence, we make primitivity adjustment by replacing $\mathcal{V}$ with $c\mathcal{V} + (1-c)\textbf{E}'$, where $\textbf{E}'$ is the teleportation matrix. We set $c$ to $0.85$ and $\textbf{E}' = 1/|V|\textbf{1}$, where $\textbf{1}$ is a $|V|\times|V|\times|R|$ all-ones matrix. These choices follow the same intuition as Section \ref{sec:RS_to_SE}. Specifically, the $\textbf{E}'$ value specifies that for any relation $r \in R$, the surfer has equal chances to teleport to any of the objects.

\subsection{Decoding Multi-relational Graph Structure via MrSE Minimization}
\label{sec:decode_MrSE}
Uncovering the essential structures within the raw and noisy graphs is crucial. 2D SE minimization \citep{li2016structural} provides an effective unsupervised tool for decoding communities from single-relational graphs and has found applications in various tasks \citep{wu2022structural, wu2023sega, cao2023hierarchical}. In this section, we propose to reveal the essential structures within multi-relational graphs by minimizing the proposed MrSE metric \footnote{Meanwhile, we note that some tasks instead require concealing the essential structures within graphs, i.e., maximizing MrSE. One such example is community structure deception \citep{liu2019rem}. We defer the investigation of MrSE maximization algorithms to the future as they typically require task-specific design.}.

Firstly, following \citep{li2016structural}, a MERGE operator is defined as follows.
\begin{definition}
Given an encoding tree $\mathcal{T}$ and its two non-root nodes, $\alpha_{o_1}$ and $\alpha_{o_2}$, MERGE$(\alpha_{o_1}, \alpha_{o_2})$ removes $\alpha_{o_1}$ and $\alpha_{o_2}$ from $\mathcal{T}$ and adds a new node $\alpha_{n}$ to $\mathcal{T}$. $\alpha_{n}$ satisfies: 1) the children nodes of $\alpha_{n}$ in $\mathcal{T}$ is a combination of the children of $\alpha_{o_1}$ and $\alpha_{o_2}$; 2) ${\alpha_n}^- = \lambda$.
\end{definition}

The merge operation changes $\mathcal{T}$ and, therefore, would cause a change in the associated MrSE value. 
Based on Definition \ref{definition:MrSE_kD_RS}, the change follows:
\begin{equation}
\label{eq:merge_delta_MrSE}
\begin{split}
& \Delta\text{MrSE}_{\alpha_{o_1}, \alpha_{o_2}} = \text{MrSE}_{new} -\text{MrSE}_{old} \\
& = - p'_{\rightarrow\alpha_n}\log{p'_{\alpha_n}} - p'_{\alpha_{o_1}}\log\frac{p'_{\alpha_{o_1}}}{p'_{\alpha_n}} - p'_{\alpha_{o_2}}\log\frac{p'_{\alpha_{o_2}}}{p'_{\alpha_n}}\\
& + p'_{\rightarrow\alpha_{o_1}}\log{p'_{\alpha_{o_1}}} + p'_{\rightarrow\alpha_{o_2}}\log{p'_{\alpha_{o_2}}}.
\end{split}
\end{equation}

The derivation of Equation (\ref{eq:merge_delta_MrSE}) can be found in Appendix \ref{app:derivation}.

We propose a 2D MrSE minimization algorithm, as shown in Algorithm \ref{algorithm:2D_MrSE}.
Initially, the encoding tree $\mathcal{T}$ assumes no communities, and each node $v \in V$ is assigned to its own cluster (line 2). At this point, the 2D MrSE associated with $\mathcal{T}$ equals the 1D MrSE and represents the intrinsic structural information within $G'$. Subsequently, the minimum 2D MrSE can be achieved by greedily and repeatedly merging the two tree nodes in $\mathcal{T}$ that would result in the largest $|\Delta \text{MrSE}|$ until no further merge can lead to a $\Delta \text{MrSE} < 0$ (lines 3 - 17). 
The algorithm outputs an optimized $\mathcal{T}$, associated with the minimum 2D MrSE. At this time, $\mathcal{T}$ encodes reliable structures within $G'$ while eliminating the noisy ones. Specifically, $\mathcal{P}$, the set of the tree nodes of height one, forms a partition of $V$ that reveals the communities in $G'$. Figures \ref{figure:SE_vs_MrSE}(a) and \ref{figure:SE_vs_MrSE}(b) visualize the 2D MrSE minimization process. We also provide a more detailed visualization in Appendix \ref{app:example_MrSE}. In addition, we note that higher-dimensional (such as 3D) MrSE minimization can be realized by repeatedly applying our 2D MrSE minimization algorithm and consolidating the identified communities into nodes. 
%Algorithm \ref{algorithm:2D_MrSE} presents our 2D MrSE minimization algorithm. 
%The multi-relational random surfing in line 1 cost $O(a|E'|)$ \cite{ng2011multirank}, where $a$ is the number of iterations it takes for the MultiRank algorithm to converge.
%Initially, each node is in its own cluster (line 2). The initial encoding tree $\mathcal{T}$ (shown in Figure \ref{}) assumes no communities, and its associated MrSE is equal to the 1D MrSE, i.e., the intrinsic information, of $G'$. 
%MrSE minimization can then be achieved by greedily and repeatedly merging the two nodes in $\mathcal{T}$ that would result in the largest $|\Delta \text{MrSE}|$ until no further merge can lead to a $\Delta \text{MrSE} < 0$ (lines 3 - 14). 
%The algorithm outputs an optimized $\mathcal{T}$ associated with the minimum MrSE. At this time, $\mathcal{T}$ encodes reliable structures while eliminating the noisy ones. Especially, $\mathcal{P}$, a set of the tree nodes of height one, reveals the communities in $G'$. Moreover, we note that higher-dimensional, such as 3D, MrSE minimization can be realized by repeatedly applying our 2D MrSE minimization algorithm and consolidating the identified communities into nodes. 
%(shown in Figure \ref{})

\noindent \textbf{Time Complexity.} The multi-relational random surfing in line 1 costs $O(|E'|)$ \citep{ng2011multirank}. The construction of initial $\mathcal{T}$ in line 2 takes $O(|V|)$. The while loop in lines 3 - 17 takes $O(|V||E'|)$. The overall time complexity of Algorithm \ref{algorithm:2D_MrSE} is thus $O(|V||E'|)$. 

\noindent \textbf{Hierarchical 2D MrSE Minimization.}
Additionally, we note that hierarchical graph partitioning can be integrated to expedite the algorithm. Inspired by the hierarchical 2D SE minimization \citep{cao2023hierarchical}, we introduce a hierarchical 2D MrSE minimization algorithm (Appendix \ref{app:hier_MrSE_minimization}) that takes $O(n^3)$. Hyperparameter $n$ is the size of the subgraph under consideration at each iteration and can be set to $ o(|V|)$.
%$n$ is a hyperparameter that can be set to $ \ll |V|$.
% Parameter 'n' denotes the size of the subgraph under consideration at each iteration.
%At the same time, specific tasks, such as community structure deception \cite{liu2019rem}, depend on maximizing the entropy of graphs to conceal their structures. We defer the investigation of MrSE maximization algorithms to the future as they typically require task-specific design.

\begin{algorithm}[h]
\caption{2D MrSE minimization}\label{algorithm:2D_MrSE}
\KwIn{Multi-relational graph $G' = (V, E', R)$}
\KwOut{An optimal encoding tree $\mathcal{T}$ of height 2}
Acquire the node and relation stationary distributions $\textbf{x}'$ and $\textbf{y}$ from multi-relational random surfing on $G'$

Initialize $\mathcal{T}$, s.t. for each node $v \in V$, add two nodes $\alpha$ and ${\alpha}^-$ to $\mathcal{T}$.  $\alpha$ is a leaf node of $\mathcal{T}$ and $T_\alpha = \{v\}$. ${\alpha}^-$ is the parent of $\alpha$ and $h({\alpha}^-) = 1$

\While{True}{
    $\mathcal{P} \gets (\alpha|\alpha \in \mathcal{T}, h(\alpha) = 1)$
    
    $\Delta \text{MrSE} \gets \infty$
    
    \For{$\alpha_i \in \mathcal{P}$}{
    
        \For{$\alpha_j \in \mathcal{P}, j > i$}{
        
            \If{there are edges between $T_{\alpha_i}$ and $T_{\alpha_j}$}{
                 $\Delta \text{MrSE}_{ij} \gets$ Eq. (\ref{eq:merge_delta_MrSE}), w/o actually merging $\alpha_i$ and $\alpha_j$\\
                \If{$\Delta \text{MrSE}_{ij} < \Delta \text{MrSE}$}{
                    $\Delta \text{MrSE} = \Delta \text{MrSE}_{ij}$\\
                    $\alpha_{o_1} = \alpha_i$\\
                    $\alpha_{o_2} = \alpha_j$
                }
            }
        }
    }
    \If{$\Delta \text{MrSE} < 0$}{
        MERGE$(\alpha_{o_1}, \alpha_{o_2})$ 
    }\Else{
        Break
    }
}
\Return{$\mathcal{T}$}
\end{algorithm}

\section{Experiments}
\label{sec:experiments}
We show that the proposed random surfing-based SE (denoted as RSSE for simplicity) approximates the original SE. Moreover, we show that as compared to SE, MrSE provides a better metric for multi-relational graph structures.
We experiment on synthetic graphs (Section \ref{sec:synthetic}) as well as tasks that involve real-world multi-relational graphs, namely node clustering (Section \ref{sec:node_clustering}) and social event detection (Section \ref{sec:social_event_detection}). 
Our code is publicly available \footnote{\href{https://github.com/YuweiCao-UIC/MrSE.git}{https://github.com/YuweiCao-UIC/MrSE.git}}.

\subsection{Simulation Experiments}
\label{sec:synthetic}
We conduct a study on single- and multi-relational synthetic graphs generated using the Barabasi-Albert (BA) model \citep{albert2002statistical}. The BA model incorporates two important general concepts that exist widely in real-world networks: growth, i.e., the network increases over time, and preferential attachment, i.e., the more connected a node is, the more likely it is to receive new links. We proceed to describe our graph generation process.

\textbf{Synthetic data.} To generate single-relational graphs, we adopt the BA graph generator from PyG \footnote{\href{https://pyg.org/}{https://pyg.org/}}. To regulate graph sparsity, edges are randomly dropped out to align with the desired sparsity level. To generate multi-relational graphs, we create multiple BA graphs of identical sizes and then concatenate them along the relation axis. Note that the BA graphs associated with each relation are generated independently, assuming no correlations between the relations. 

\textbf{Experiment setup.} The calculation of 1D SE, RSSE, and MrSE follow Definition \ref{definition:SE_kD}, Equation (\ref{eq:SE_kD_RS}), and Definition \ref{definition:MrSE_kD_RS}, respectively. The calculation of the minimum 2D SE and RSSE follow the 2D SE minimization algorithm in \citep{li2016structural}, while the calculation of the minimum MrSE follows the proposed 2D MrSE minimization algorithm (Algorithm \ref{algorithm:2D_MrSE}). Note that SE, RSSE, and the 2D SE minimization algorithm are metrics and algorithms designed for single-relational graphs. To apply them on a multi-relational graph $G'$, we preprocess $G'$ by ignoring the heterogeneity in its relations and mapping $G'$ into a single-relational graph $G$, as visualized by Figures \ref{figure:SE_vs_MrSE} (a) and \ref{figure:SE_vs_MrSE} (c).

\textbf{Compare random surfing-based SE to original SE.} Figure \ref{fig:compare_SE} presents the 1D and the \textit{minimum} 2D (denoted as 2D in the legend for simplicity) SE and RSSE of single-relational graphs with varying sizes and sparsities. 
In Figure \ref{fig:compare_SE} (a), the 1D SE and RSSE values increase with the graph size. This means larger graphs contain more structural information. In addition, the minimum 2D SE value also increases with the graph size, indicating that larger graphs contain more noise (this noise refers to the structural information that the optimal encoding tree, derived from the 2D SE minimization process, struggles to interpret). Meanwhile, the 1D RSSE values closely match the 1D SE values, and the minimum 2D RSSE values closely align with the minimum 2D SE values. This alignment suggests that our proposed random surfing-based method is a reliable approximation of the original SE.
%Similarly, in Figure \ref{fig:compare_SE} (b), denser graphs contain more structural information, as reflected in the rising 1D SE and RSSE values. The rising minimum 2D SE and RSSE values indicate an increase in noise within denser graphs. 
Likewise, Figure \ref{fig:compare_SE} (b) suggests that denser graphs encompass greater structural information and are more noisy.
In addition, our proposed random surfing-based method effectively approximates the original SE, except for very sparse (sparsity $>$ 98\%) graphs. When the graph is sparse, both 1D and minimum 2D RSSE are higher than that of the original SE. This is caused by the imagined edges and the teleportation matrix introduced during the stochasticity and primitivity adjustments (detailed in Section \ref{sec:MRS_to_MrSE}).

%\textbf{Measure multi-relational graph structural information with MrSE (RQ2).} We compare our proposed MrSE metric to SE on multi-relational graphs. Figures \ref{} and \ref{} show the 1D MrSE and 1D SE (measuring SE requires mapping the multi-relational graphs into single-relational ones) of multi-relational graphs with varying sizes and sparsities, respectively. We can tell that the 1D MrSE is always higher than the 1D SE, suggesting that our proposed MrSE metric encodes more structural information within multi-relational graphs. Specifically, SE only captures the node-wise structures, while MrSE further captures the relation-wise structures. I.e., MrSE provides a better metric for multi-relational graph structural information as compared to SE. 
%misses the structural information contained in the relation dimension

\textbf{Decode multi-relational graph structural information.} We compare the effectiveness of MrSE, RSSE, and SE in decoding the structural information of multi-relational graphs. For each multi-relational graph $G'$, we measure $\Delta \text{MrSE} = (\text{1D MrSE} - \text{minimum 2D MrSE})/\text{1D MrSE}$, which represents the fraction of the structural information within $G'$ that successfully decoded by minimizing 2D MrSE. The larger $\Delta \text{MrSE}$ is, the more effective MrSE is at deciphering the structure of $G'$. We measure $\Delta \text{SE}$ and $\Delta \text{RSSE}$ in similar manners, except that the heterogeneity in the relations of $G'$ is ignored, and 2D SE minimization \citep{li2016structural} instead of our proposed 2D MrSE minimization algorithm is applied. Figures \ref{fig:delta_MrSE}(a), \ref{fig:delta_MrSE}(b), and \ref{fig:delta_MrSE}(a) present the $\Delta$SE, $\Delta$RSSE, and $\Delta$MrSE of multi-relational graphs with varying sizes, total number of relations, and sparsities, respectively. We can tell that as the graph size and total number of relations increase, $\Delta \text{SE}$, $\Delta \text{RSSE}$, and $\Delta \text{MrSE}$ show a declining pattern, while they exhibit an ascending trend with sparsity. This suggests that graphs that are larger, denser, and contain more complex relations are more difficult to decipher. Moreover, it is evident that $\Delta \text{MrSE}$ consistently surpasses $\Delta \text{SE}$ and $\Delta \text{RSSE}$, despite the changes in graph size, total number of relations, and sparsity. This suggests that our proposed MrSE, compared to SE and RSSE, offers a more effective tool for measuring and decoding the structural information in multi-relational graphs.

%\textbf{Decode multi-relational graph structural information with MrSE (RQ3).} We compare our proposed 2D MrSE minimization with 2D SE minimization \cite{li2016structural} on decoding the structural information of multi-relational graphs. Given a multi-relational graph $G'$, we first calculate its 1D MrSE, then apply 2D MrSE minimization to acquire the minimized 2D MrSE of $G'$. $\Delta \text{MrSE} = (\text{1D MrSE} - \text{2D MrSE})/\text{1D MrSE}$ represents the fraction of the structural information within $G'$ that our 2D MrSE minimization algorithm successfully decoded. The larger $\Delta \text{MrSE}$ is, the more effective our algorithm is at deciphering the structure of $G'$. Similarly, $\Delta \text{SE} = (\text{1D SE} - \text{2D SE})/\text{1D SE}$ measures the effectiveness of the 2D SE minimization algorithm in decoding the structure of $G$, a single-relational graph mapped from $G'$. Figures \ref{}, \ref{}, and \ref{} show the $\Delta$ MrSE and $\Delta$ SE of multi-relational graphs with varying number of relations, sizes, and sparsities, respectively. It is evident that $\Delta \text{MrSE}$ consistently surpasses $\Delta \text{SE}$, suggesting that our proposed 2D MrSE minimization, when compared to 2D SE minimization, offers a more efficient approach for decoding the structural information in multi-relational graphs.

%To evaluate our proposed MrSE and 2D MrSE minimization algorithm (RQ2 and RQ3), multi-relational

\begin{figure}
     \centering
     \begin{subfigure}[b]{5.5cm}
         \centering
         \includegraphics[width=\columnwidth]{./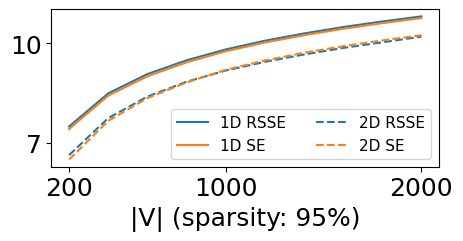}
         \caption{}
         %\caption{SE with varying graph sizes}
     \end{subfigure}
     %\bigskip
     %\hfill
     \begin{subfigure}[b]{5.2cm}
         \centering
         \includegraphics[width=\columnwidth]{./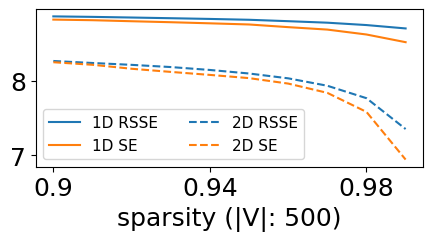}
         \caption{}
         %\caption{SE with varying sparsities}
     \end{subfigure}
        \caption{The 1D and 2D SE and RSSE of single-relational graphs with varying sizes (a) and sparsities (b).}
        \label{fig:compare_SE} 
\end{figure}

\begin{figure*}
     \centering
     \begin{subfigure}[b]{5.6cm}
         \centering
         \includegraphics[width=\columnwidth]{./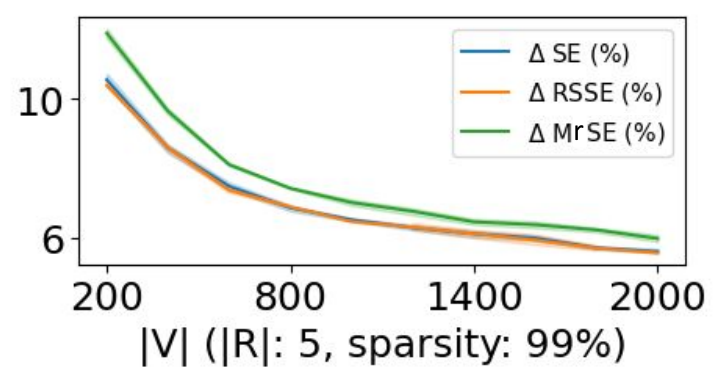}
         \caption{}
     \end{subfigure}
     \hfill
     \begin{subfigure}[b]{5.6cm}
         \centering
         \includegraphics[width=\columnwidth]{./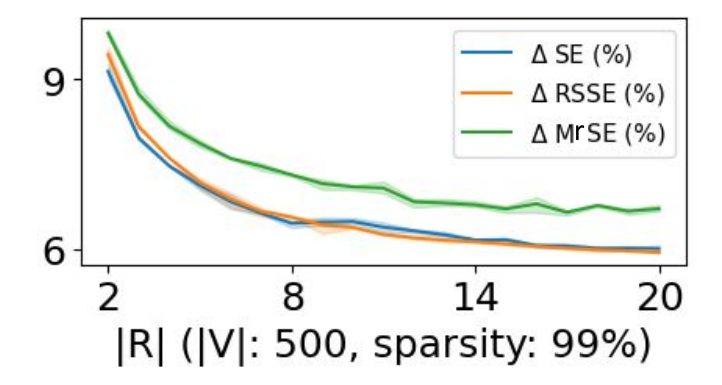}
         \caption{}
         %\caption{NDCG@5 w.r.t. $w_3$}
     \end{subfigure}
     \hfill
     \begin{subfigure}[b]{5.6cm}
         \centering
         \includegraphics[width=\columnwidth]{./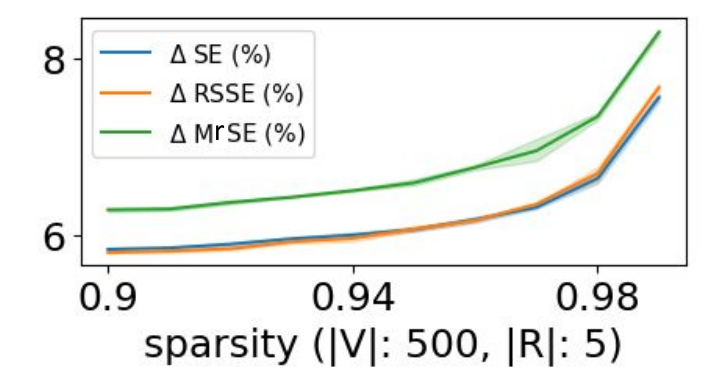}
         \caption{}
         %\caption{Loss values of lines 23}
     \end{subfigure}
    
        \caption{The $\Delta$SE, $\Delta$RSSE, and $\Delta$MrSE of multi-relational graphs with varying sizes (a), the total number of relations (b), and sparsities (c).}
        \label{fig:delta_MrSE}
\vspace{-1em}
\end{figure*}

\subsection{Multi-relational Node Clustering}
\label{sec:node_clustering}
Unsupervised node clustering is an essential task in graph analysis. In this section, we evaluate our proposed MrSE on multi-relational graph node clustering.
%In this section, we conduct node clustering on multi-relational graphs to validate the effectiveness of our proposed 2D MrSE minimization algorithm.

\noindent \textbf{Datasets.} Following previous multi-relational graph embedding studies \citep{park2020unsupervised}, we evaluate on IMDB \citep{fu2020magnn}, DBLP \citep{fu2020magnn}, and ACM \citep{lv2021we}. IMDB is a movie dataset. The movies
are divided into three classes (action, comedy, drama) according to their genre. Movie features correspond to the bag-of-words representations of plots. DBLP is a publication dataset containing authors that are labeled according to their research areas (database, data mining, machine learning, information retrieval). Author features are the bag-of-words representations of keywords. ACM is a publication dataset containing papers divided into three classes (database, wireless communication, and data mining). Paper features correspond to the bag-of-words representations of keywords. Following \citep{park2020unsupervised}, the multi-relational edges are inferred via intermediate nodes (e.g., for IMDB, the `M-A-M' edges are inferred via actor nodes, and the `M-D-M' edges are inferred via director nodes). Table \ref{table:node_clustering_data} shows the data statistics. 

%https://arxiv.org/pdf/1911.06750.pdf
\begin{table}[t]
\caption{Statistics of the node clustering datasets. For IMDB, M, A, and D denote movie, actor, and director; for DBLP, A, P, T, and C denote author, paper, term, and conference; for ACM, P, C, A, S, and T denote paper, cite, author, subject, and term.}
\centering
\scriptsize
\begin{adjustbox}{width=\linewidth}
    \begin{tabular}{c|ccccc}
    \toprule
    %\hline
    Dataset & $|V|$ & $R$ & $|E'|$ & Sparsity (\%) & $|Y|$ \\
    \midrule
    %\hline
    \multirow{4}{*}{ACM} & \multirow{4}{*}{3,025} & P-C-P & 5,335 & 99.88 & \multirow{4}{*}{3} \\
    &  & P-A-P & 13,374 & 99.71 & \\
    &  & P-S-P & 1,107,032 & 75.80 & \\
    &  & P-T-P & 4,573,785 & $<$0.01 & \\
    \midrule
    %\hline
    \multirow{3}{*}{DBLP} & \multirow{3}{*}{4,057} & A-P-A & 3,528 & 99.96 & \multirow{3}{*}{4} \\
    &  & A-P-T-P-A & 3,519,757 & 57.22 &  \\
    &  & A-P-C-P-A & 2,498,219 & 69.64 &  \\
    \midrule
    %\hline
    \multirow{2}{*}{IMDB} & \multirow{2}{*}{4,278} & M-A-M & 40,540 & 99.56 & \multirow{2}{*}{3} \\
    &  & M-D-M & 6,584 & 99.93 \\
    \bottomrule
    %\hline
    \end{tabular}
\end{adjustbox}
  \label{table:node_clustering_data}
\vspace{-1.5em}
\end{table}

\noindent \textbf{Baselines.} We compare the proposed \textbf{MrSE} to \textbf{SE} \citep{li2016structural} and \textbf{RSSE}. We further compare to a deep learning-based spectral clustering method, i.e., \textbf{SpectralNet} \citep{shaham2018spectralnet}. We also consider random walk-based methods, including \textbf{node2vec} \citep{grover2016node2vec}, which learns node embeddings with random walks and skip-gram, and \textbf{metapath2vec} \citep{dong2017metapath2vec}, which performs metapath-based random walk.
In addition, we compared to GNN-based methods, including
\textbf{DGI} \citep{velivckovic2018deep}, which maximizes global-local mutual information; \textbf{DMGI} \citep{park2020unsupervised}, which is the multi-relational counterpart of DGI; \textbf{DMoN} \citep{tsitsulin2023graph}, which maximizes graph modularity \citep{newman2006finding}. We additionally perform \textbf{k-means} clustering using node features to gauge their informativeness.
All methods are unsupervised. Among them, metapath2vec, DMGI, and MrSE explore the heterogeneity of relations. Note that the GNN-based methods, i.e., DGI, DMGI, and DMoN, leverage node features in addition to graph structure, which gives them an extra edge over methods that rely solely on graph structure.
% kmeans?

\noindent \textbf{Metrics.} Following \citep{park2020unsupervised,tsitsulin2023graph}, we report normalized mutual information (NMI). We further report adjusted rand index (ARI), and unsupervised clustering accuracy (ACC, in Appendix \ref{app:node_clustering_acc}), which are commonly adopted clustering metrics.

\noindent \textbf{Experiment setup.} To evaluate single-relational methods including SpecturalNet, DMoN, DGI, SE, and RSSE, we preprocess the multi-relational datasets by mapping them into single-relational ones. Additionally, following \citep{park2020unsupervised}, we explore leveraging heterogeneous relations with the single-relational embedding methods, i.e., SpecturalNet, DMoN, and DGI. Specifically, we
obtain the final node embedding by averaging the node embeddings obtained from single-relational graphs that correspond to each relation. For node2vec and metapath2vec, we extend the graphs to contain the intermediate nodes (e.g., the actor and director nodes for IMDB). For k-means, we set the number of clusters to the ground truth, i.e., 3, 4, 3 for ACM, DBLP, and IMDB,
respectively. Similarly, for the representation-learning models including node2vec, metapath2vec, DGI, and DMGI, we perform k-means clustering after learning representations, setting the number of clusters to the ground truth to obtain final community predictions. We implement SE, RSSE, and MrSE using Python. For SpectralNet, we use the source code provided by the authors \footnote{\href{https://github.com/shaham-lab/SpectralNet}{https://github.com/shaham-lab/SpectralNet}}. For the rest models, we leverage the implementations from the PyG package. We repeat all experiments 5 times and average results across runs. The method-specific hyperparameters are decided according to the original papers and are provided in Appendix~\ref{app:node_clustering_setting}.
%All models are tuned to achieve their optimal performance, and detailed hyperparameters are provided in Appendix \ref{}.
%%================

%%================
\noindent \textbf{Node clustering results.}
Table \ref{table:node_clustering_data} presents the node clustering results. MrSE outperforms all baselines in both NMI and ARI on the ACM and DBLP datasets. On the IMDB dataset, MrSE achieves the highest NMI but falls short in terms of ARI. These results highlight the strong capability of MrSE in identifying communities within multi-relational graphs. This is particularly noteworthy given that certain methods, i.e., the GNN-based ones, utilize graph structure and node features, whereas MrSE relies exclusively on graph structure. 

For the ACM dataset, which contains extremely dense relations (e.g., `P-T-P'), single-relational methods show near-zero results or fail to run. Addressing the heterogeneity of relations, either by averaging per-relation embeddings, introducing metapaths, or applying distinct weights to relations, results in a substantial performance boost. E.g., DGI with $G', X$ input outperforms DGI with $G, X$ input, metapath2vec outperforms node2vec, and MrSE outperforms SE and RSSE.
For the DBLP and IMDB datasets, the strategy of averaging per-relation embeddings proves to be less effective in handling heterogeneous relations, as evidenced by the lower performance of DGI with $G', X$ input compared to DGI with $G, X$ input. Somewhat surprisingly, DMGI performs worse than DGI with $G', X$ input on all three datasets, suggesting that weighting per-relation embeddings with attention is less effective than simply averaging them. 

In addition, we observe that SE and RSSE resemble each other across datasets and metrics, showing that RSSE  effectively approximates SE. Moreover, MrSE consistently outperforms SE and RSSE by large margins, indicating that MrSE offers a better metric for interpreting the structural information within multi-relational graphs.

% 
% DMGI

\begin{table}[t]
\caption{Node clustering results (\%). `/' indicates that SpectralNet fails to run on ACM.}
\centering
\scriptsize
\begin{adjustbox}{width=\linewidth}
    \begin{tabular}{c|c|cc|cc|cc}
    \toprule
    %\hline
    \multirow{2}{*}{Method} & \multirow{2}{*}{Input} & \multicolumn{2}{c}{ACM} &\multicolumn{2}{c}{DBLP} & \multicolumn{2}{c}{IMDB}\\
     & & NMI & ARI & NMI & ARI & NMI & ARI \\
    %\hline
    \midrule
    k-means & $X$ & 25.80 & 16.32 & 20.65 & 7.37  & 3.59 & 0.00 \\
    %\hline
    \midrule
    DMoN & $G$, $X$ & 0.00 & 0.00 & 38.23 & 6.50 & 4.53 & 0.61\\
    DGI & $G$, $X$ & 0.32 & 0.01 & 33.38 & 12.85  & 7.70 & \textbf{9.07}\\
    %\hline
    \midrule
    SpectralNet & $G$ & / & /  & 39.09 & 8.44 & 3.83 & 0.69 \\
    node2vec & $G$ & 0.09 & 0.03  & 27.04 & 15.90 & 2.75 & 3.01 \\
    SE & $G$ & 3.16 & 3.16 & 39.10 & \underline{41.84}  & \underline{8.82} & 0.17 \\
    RSSE (ours) & $G$ & 3.56 & 3.66  & 39.12 & 41.06 & \underline{8.82} & 0.18 \\
    %\hline
    \midrule
    DMoN & $G'$, $X$ & 21.10 & 8.82 & 15.99 & 6.84 & 1.27 & 0.47  \\
    DGI & $G'$, $X$ & 47.03 & \underline{43.66}  & 34.29 & 29.79 & 0.44 & 0.00\\
    DMGI & $G'$, $X$ & 25.53 & 20.50 & 1.43 & 1.21 & 0.63 & 0.45 \\
    %\hline
    \midrule
    SpectralNet & $G'$ & / & /  & 36.58 & 30.88 & 1.16 & 0.15\\
    metapath2vec & $G'$ & \underline{47.51} & 42.90 & \underline{45.56} & 36.76  & 5.34 & \underline{5.10} \\
    MrSE (ours) & $G'$ & \textbf{48.36} & \textbf{55.80} & \textbf{49.26} & \textbf{55.78} & \textbf{13.68} & 0.19 \\
    %\hline
    \bottomrule
    \end{tabular}
\end{adjustbox}
  \label{table:node_clustering_data}
\end{table}

\begin{table}[t]
\caption{Social event detection results (\%), averaged over all blocks.}
\centering
\begin{adjustbox}{width=0.85\linewidth}
    \begin{tabular}{c|c|c|cc}
    \toprule
    Dataset & Metric & HISEvent  & RSSE (ours) & MrSE (ours) \\
    \midrule
    \multirow{2}{*}{Event2012} & NMI & 82.94 &  \underline{83.01} & \textbf{84.17} \\
     & ARI & \underline{63.15}  & 63.10  &  \textbf{64.17}  \\
    \midrule
    \multirow{2}{*}{Event2018} & NMI & \underline{76.08}  &  75.82 &  \textbf{76.64} \\
     & ARI & \underline{60.25}  & 59.39 &  \textbf{60.91} \\
    \bottomrule
    \end{tabular}
\end{adjustbox}
  \label{table:social_event_detection_results}
\vspace{-1em}
\end{table}

\subsection{Social Event Detection}
\label{sec:social_event_detection}

We note that the proposed MrSE, serving as the multi-relational counterpart to SE, can enhance the extensive applications of SE by tackling heterogeneous relations. One such application is social event detection, which is commonly formalized as extracting clusters of co-related messages from streams of social media messages. \citep{cao2023hierarchical} achieves SOTA social event detection performance using 2D SE minimization. However, it overlooks the heterogeneity of message correlations. In this section, we explore social event detection using the proposed 2D MrSE minimization and observe the performance changes resulting from the introduction of heterogeneous message correlations.

\noindent \textbf{Datasets.} We experiment on two large, public Twitter datasets, i.e., Event2012 \citep{mcminn2013building} and Event2018 \citep{mazoyer2020french}. 
Within Event2012, there are 68,841 English tweets associated with 503 events, spanning a four-week period. Event2018 consists of 64,516 French tweets discussing 257 events and occurring over a 23-day period.
We adopt the data splits of \citep{cao2023hierarchical} to evaluate under the open-set settings, which assumes the events happen over time and splits the datasets into day-wise message blocks. Data statistics are in Appendix \ref{app:social_event_data}.

\noindent \textbf{Baselines.} We compare to \textbf{HISEvent} \citep{cao2023hierarchical}, which is the current SOTA of social event detection. 
It begins by constructing \textit{message graphs}, where nodes represent messages and correlated messages are connected (these correlations may arise from shared senders, mentioned users, hashtags, named entities, or similar natural language semantics. Such heterogeneity is ignored).
Subsequently, it partitions the message graphs using 2D SE minimization to extract social events, which are represented by clusters of messages. 
Note that we omit the direct comparison with various social event detectors that HISEvent has outperformed, including ones that leverage GNN \citep{cao2021knowledge, ren2022known, peng2022reinforced, ren2023uncertainty}, betweenness centrality \citep{liu2020story}, TF-IDF \citep{bafna2016document}, LDA \citep{blei2003latent}, etc.

\noindent \textbf{Metrics.} Following previous social event detection studies \citep{cao2023hierarchical}, we report NMI and ARI.

\noindent \textbf{Experiment setup.} For HISEvent, we use the source code provided by the authors \footnote{\href{https://github.com/SELGroup/HISEvent}{https://github.com/SELGroup/HISEvent}}.
To evaluate RSSE, we simply replace the SE in HISEvent with RSSE. To evaluate MrSE, we make two changes: first, we explore the heterogeneity of message correlations and construct \textit{multi-relational message graphs} (detailed in Appendix \ref{app:social_event_graph}); second, we replace the 2D SE minimization in HISEvent with the proposed 2D MrSE minimization. For all three methods, we adopt the same hyperparameters as specified in the HISEvent paper.

\noindent \textbf{Social event detection results.} Table \ref{table:social_event_detection_results} presents the social event detection results. MrSE demonstrates superior performance compared to HISEvent across datasets and metrics. This suggests that by delving into heterogeneous message correlations, the proposed MrSE enhances the efficacy of social event detection in comparison to HISEvent, which relies on the original SE and overlooks the heterogeneity in message correlations. Furthermore, RSSE performs comparably to HISEvent, which utilizes SE. This suggests that RSSE and SE can be used interchangeably.

\section{Conclusion}
\label{sec:conclusion}
In this study, we propose MrSE, the first metric of multi-relational graph structural information. We begin by reexamining the original definition of SE from the viewpoint of random surfing. Subsequently, the definition of MrSE is derived from random surfing on multi-relational graphs. Additionally, we introduce a 2D MrSE minimization algorithm designed to unveil communities within these complex graphs. Results from experiments on both synthetic and real-world graphs, including movie, publication, and social message networks, demonstrate that the proposed MrSE is a powerful metric for assessing and unraveling the structural information within multi-relational graphs. MrSE exhibits strong performance in two tasks, namely multi-relational node clustering and social event detection.

\section*{Acknowledgments}
We thank Prof. Xinhua Zhang for his helpful discussions and suggestions.
This work is supported by the National Key R\&D Program of China through grant 2022YFB3104700, NSFC through grants 62322202 and 61932002, Beijing Natural Science Foundation through grant 4222030, Guangdong Basic and Applied Basic Research Foundation through grant 2023B1515120020, S\&T Program of Hebei through grants 20310101D and 21340301D, and Shijiazhuang Science and Technology Plan Project through grant 231130459A.
Philip S. Yu is supported in part by NSF under grant III-2106758.

% References
%\bibliography{uai2024-template}
\bibliography{MrSE}

\newpage

%\onecolumn

%\title{Title in Title Case\\(Supplementary Material)}
%\maketitle

\newpage
\appendix
%\onecolumn
\section{Notations}
\label{app:notations}

Table \ref{table:notation} summarizes the main notations used in this paper.

\begin{table}[t]
    \resizebox{\linewidth}{!}{%
    \begin{tabular}{r|l}  
    \hline
      \textbf{Notation} & \textbf{Description}\\
      \hline
      $G = (V, E)$ & Single-relational graph with node set $V$ \\ & and edge set $E$  \\ 
      $\textbf{A}$ & $\textbf{A} \in \mathbb{R}_{+}^{|V|\times |V|}$, the adjacency matrix of $G$ \\
      $\tilde{\textbf{A}}$ & Transition probability matrix of random surfing \\
      %Column-normalized $\textbf{A}$\\
      $\textbf{B}$ & $\tilde{\textbf{A}}$ after primitivity adjustment \\
      $\textbf{E}$ & Teleportation matrix \\
      $\textbf{x}$ & The stationary distribution of random surfing \\
      $G' = (V, E', R)$ & Multi-relational graph with node set $V$, \\ & heterogeneous edge set $E'$, and relation set $R$  \\ 
      $\textbf{A}'$ & $\textbf{A}' \in \mathbb{R}_{+}^{|V|\times |V|\times|R|}$, the adjacency matrix of $G'$ \\
      $\mathcal{V};\mathcal{R}$ & Node and relation transition probability matrices \\ & of multi-relational random surfing\\
      $\textbf{E}$ & Multi-relational teleportation matrix \\
      $\textbf{x}';\textbf{y}$ & Node and relation stationary distributions of  \\ & multi-relational random surfing \\
      $\mathcal{T}$ & Encoding tree\\
      $\alpha, \lambda, \gamma \in \mathcal{T}$ & Node, root node, leaf node in $\mathcal{T}$\\
      $\alpha^-$ & The parent of $\alpha$\\
      $T_\alpha, T_\lambda, T_\gamma \in V$ & Node sets $\subseteq V$ that associate with  $\alpha, \lambda, \gamma$\\
      $h(\alpha);$ & Height of $\alpha$ \\
      $h(\mathcal{T});$ & Height of $\mathcal{T}$ \\
      $g_\alpha$ & Summation of the degrees of the cut edges of $T_\alpha$\\
      $\mathrm{vol}(T_\alpha);\mathrm{vol}(T_\lambda)$ & Volume of $T_\alpha$; Volume of $T_\lambda$, i.e., $V$\\
      $\mathcal{P}$ & A partition of $V$\\
      $\mathcal{H}^{\mathcal{T}}(G)$ & The structural entropy (SE) of $G$ relative to $\mathcal{T}$ \\
      $\mathcal{H}^{(k)}(G)$ & The $k$-dimensional SE of $G$\\
      $\mathcal{H}^{\mathcal{T}}(G')$ & The multi-relational structural entropy (MrSE) \\ & of $G'$ relative to $\mathcal{T}$ \\
      $\mathcal{H}^{(k)}(G')$ & The $k$-dimensional MrSE of $G'$\\
      $p_{\rightarrow\alpha}$ & The probability of entering community $T_\alpha$ \\ & during random surfing \\
      $p_{\alpha}$ & The probability of being in community $T_\alpha$  \\ & during random surfing \\
      $p'_{\rightarrow\alpha}$ & The probability of entering community $T_\alpha$ \\ & during multi-relational random surfing \\
      $p'_{\alpha}$ & The probability of being in community $T_\alpha$  \\ & during multi-relational random surfing \\ 
      \hline
    \end{tabular}}
    \caption{Glossary of Notations.}
    \label{table:notation}
\end{table}

\begin{figure*}[ht]
\begin{center}
\centerline{\includegraphics[width=1.7\columnwidth]{./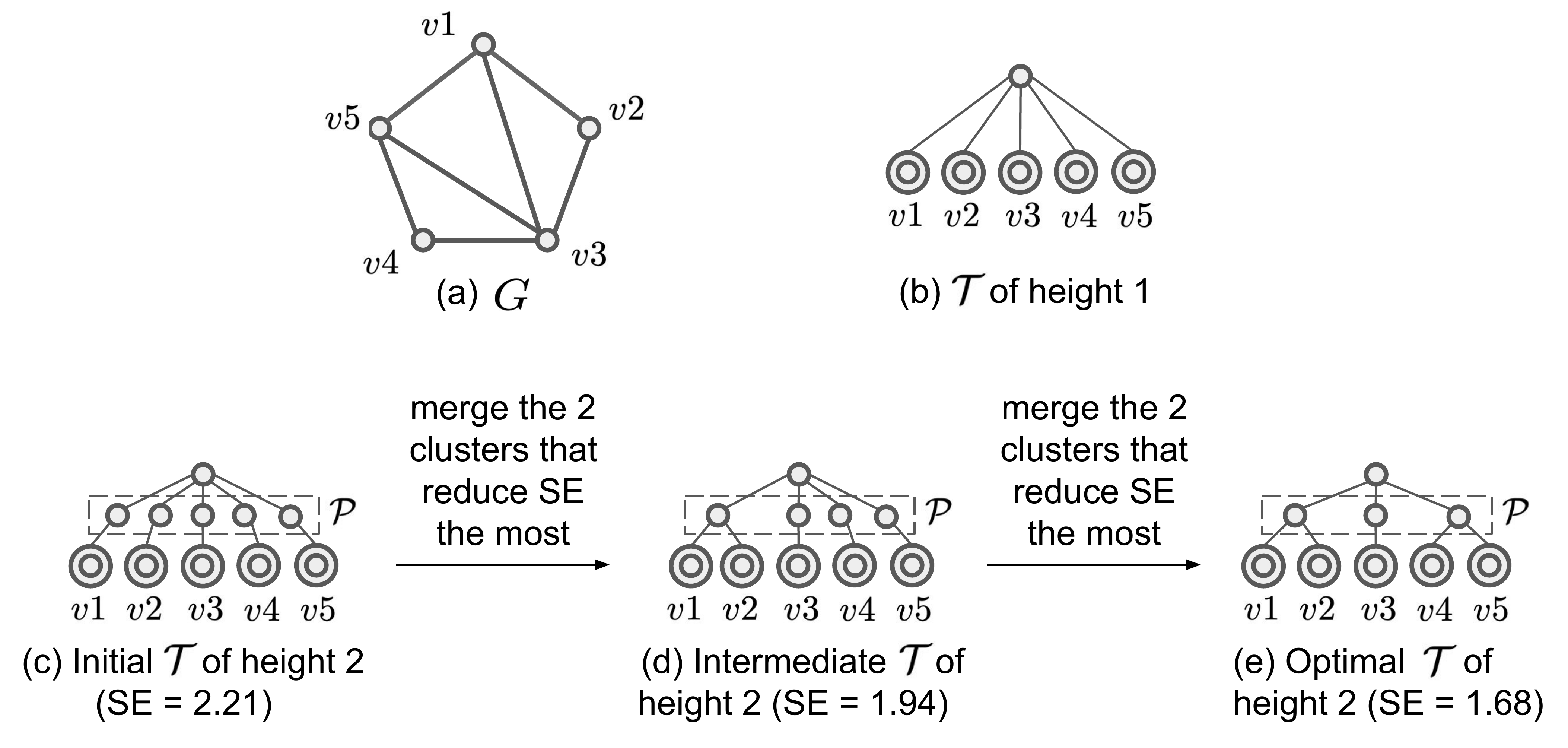}}
\caption{Examples of single-relational graph, encoding tree, and 2D SE minimization. (a) is a single-relational graph $G$. (b) is the encoding tree of height 1, which encodes the 1st-order structures, i.e., nodes, in $G$. (c) - (e) demonstrate how 2D SE minimization detects the 2nd-order structures, i.e., communities, in $G$. Initially, each node in $G$ is assigned to its own cluster. $\mathcal{P}$ in (c) shows the initial clusters. Following the vanilla greedy 2D SE minimization algorithm \cite{li2016structural}, at each step, any two clusters that would reduce SE the most are merged. Eventually, the optimal encoding tree of height 2, as shown in (e), is associated with the minimum possible SE value, and encodes the communities, in $G$. $\mathcal{P}$ in (e) shows the detected communities.}
\label{figure:SE}
\end{center}
\end{figure*}

\begin{figure*}[ht]
\begin{center}
\centerline{\includegraphics[width=2\columnwidth]{./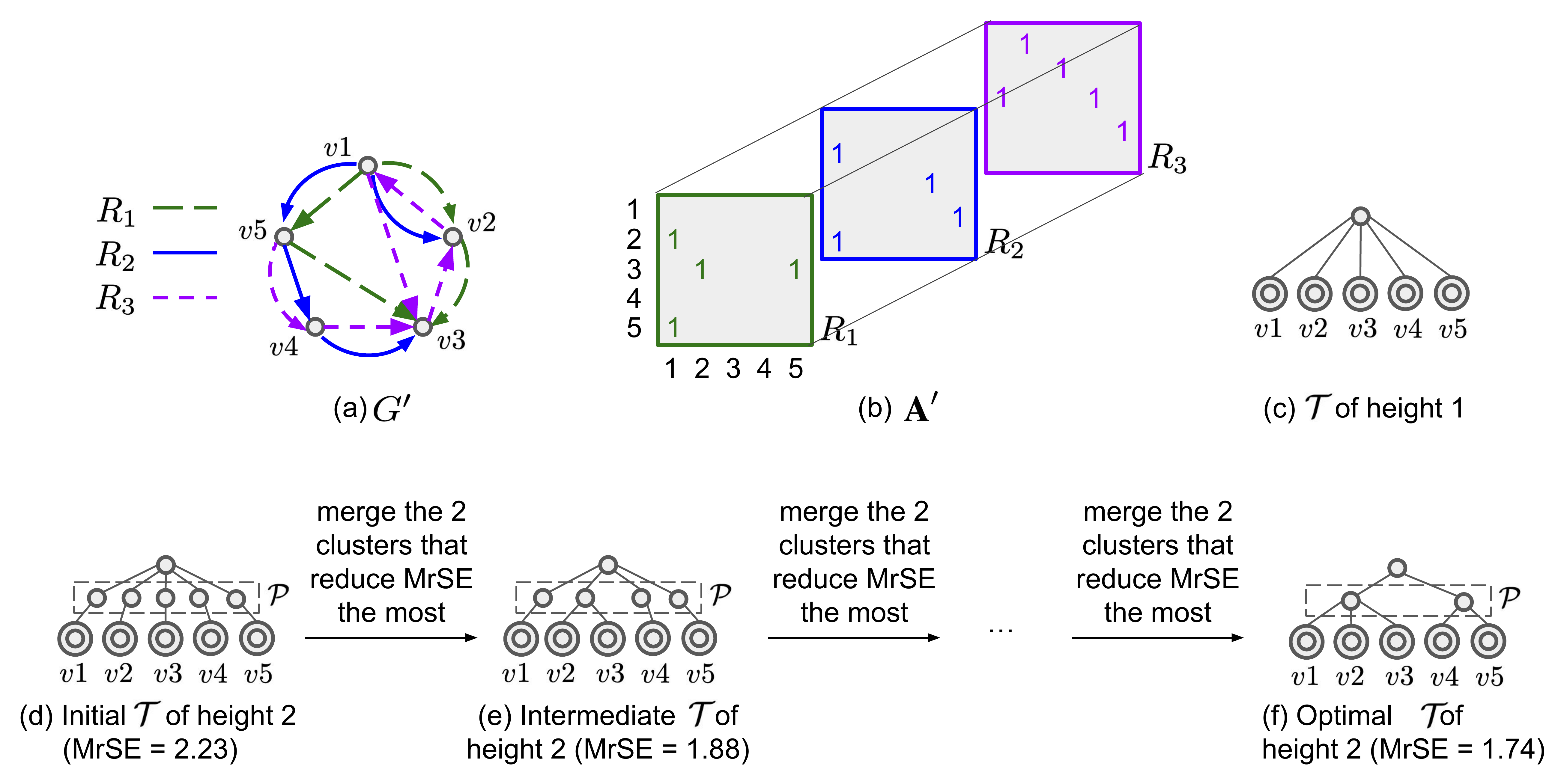}}
\caption{Examples of multi-relational graph, encoding tree, and 2D MrSE minimization. (a) is a multi-relational graph $G'$. (b) is the adjacency tensor of $G'$. (c) is the encoding tree of height 1, which encodes the 1st-order structures, i.e., nodes, in $G'$. (d) - (f) demonstrate how 2D MrSE minimization detects the 2nd-order structures, i.e., communities, in $G'$. Initially, each node in $G'$ is assigned to its own cluster. $\mathcal{P}$ in (d) shows the initial clusters. Following our proposed 2D MrSE minimization algorithm (Algorithm \ref{algorithm:2D_MrSE}), at each step, any two clusters that would reduce MrSE the most are merged. Eventually, the optimal encoding tree of height 2, as shown in (f), is associated with the minimum possible MrSE value, and encodes the communities, in $G'$. $\mathcal{P}$ in (f) shows the detected communities.}
\label{figure:MrSE}
\end{center}
\end{figure*}

\section{Examples of 2D SE minimization}
\label{app:example_SE}
We provide examples of single-relational graph, encoding trees, and the 2D SE minimization process in Figure \ref{figure:SE}. 

Note that an encoding tree is essentially a description of a graph’s structure. For a graph $G$, the encoding tree of height 1 is unique, containing one root node and $|V|$ leaf nodes, each corresponds to a node in $G$. In this way, the encoding tree of height 1 simply describes the fact that $G$ has $|V|$ nodes, and makes no assumptions about higher-order structures, such as communities. Figure \ref{figure:SE} (b) shows an example of an encoding tree of height 1. On the other hand, an encoding tree of height 2 has an intermediate layer between the root node and the leaf nodes. This intermediate layer describes the 2nd-order structures, i.e., communities in the graph. Since there are different ways to partition the nodes in $G$, a $G$ can have many encoding trees of height 2. In the task of community detection, the goal is to find the optimal encoding tree of height 2, i.e., the one that is associated with the minimized SE. Figures \ref{figure:SE} (c) - (e) show examples of encoding trees of height 2, among which (e) is the optimal one.

Figures \ref{figure:SE} (c) - (e) illustrate community detection through 2D SE minimization. Initially, each node in $G$ is assigned to its own cluster. $\mathcal{P}$ in (c) shows the initial clusters. Following the vanilla greedy 2D SE minimization algorithm \cite{li2016structural}, at each step, any two clusters that would reduce SE the most are merged. Eventually, the optimal encoding tree of height 2, as shown in (e), is associated with the minimum possible SE value, and encodes the communities, in $G$. $\mathcal{P}$ in (e) shows the detected communities.

\section{Examples of 2D MrSE minimization}
\label{app:example_MrSE}
We provide examples of multi-relational graph, encoding trees, and the 2D MrSE minimization process in Figure \ref{figure:MrSE}. 

As an example of multi-relational random surfing, consider $G'$ and $\textbf{A}'$ as shown in Figure \ref{figure:MrSE} (a) and (b). At each step of the multi-relational surfing on $G'$, the surfer follows $\textbf{A}'$ to randomly and jointly decide which neighboring node to visit as well as which relation to use. E.g., assume that the surfer is at node $v_1$. Through relation $R_1$, they can choose to visit $v_2$ or $v_5$, as $\textbf{A}_{2,1,R_1}^{'} = 1$ and $\textbf{A}_{5,1,R_1}^{'} = 1$. Similarly, through $R_2$, they can choose to visit $v_2$ or $v_5$, as $\textbf{A}_{2,1,R_2}^{'} = 1$ and $\textbf{A}_{5,1,R_2}^{'} = 1$. Finally, through $R_3$, the surfer can choose to visit $v_3$, as $\textbf{A}_{3,1,R_3}^{'} = 1$. In this manner, the surfer takes an infinite long random walk on $G'$.

The 2D MrSE minimization process is similar to the 2D SE minimization, shown in Figure \ref{figure:SE}. The distinction is that the proposed MrSE, instead of SE, is utilized to determine which two clusters to merge at each step.

\section{Hierarchical 2D MrSE Minimization}
\label{app:hier_MrSE_minimization}
Inspired by how \cite{cao2023hierarchical} hierarchically minimizes 2D SE, we propose to speed up Algorithm \ref{algorithm:2D_MrSE} with hierarchical graph partitioning. Algorithm \ref{algorithm:hierarchical_2D_MrSE} shows our hierarchical 2D MrSE minimization algorithm. Instead of simultaneously considering the entire $G'$, Algorithm \ref{algorithm:hierarchical_2D_MrSE} minimizes the MrSE of one subgraph of size $n$ at a time (lines 5-13). After minimizing the MrSE values for all subgraphs, the process continues by treating the clusters formed in the last iteration as nodes to be merged in the subsequent iteration (lines 3-4). Such a process terminates after all nodes are considered simultaneously (lines 14-15). If, at some point, it becomes impossible to merge nodes within any subgraph, we augment the parameter $n$ to encompass a greater number of nodes within the same subgraph (lines 16-17). This adjustment allows for the possibility of merging additional nodes. 

The overall running time complexity of Algorithm \ref{algorithm:hierarchical_2D_MrSE} is reduced from $O(|V||E'|)$
to $O(|V_g||E^{'}_{g}|) < O(n^3)$, where $|V_g| = n$ is the size of one subgraph and $|E^{'}_{g}| < n^2$ is the number of edges in one subgraph.

\begin{algorithm}[t]
%\small
\caption{Hierarchical 2D MrSE minimization.}\label{algorithm:hierarchical_2D_MrSE}
\KwIn{Multi-relational graph $G' = (V, E', R)$, sub-graph size $n$}
\KwOut{An optimal encoding tree $\mathcal{T}$ of height 2}
%Acquire the node and relation stationary distributions $\textbf{x}'$ and $\textbf{y}$ from multi-relational random surfing on $G'$

Initialize $\mathcal{T}$, s.t. for each node $v \in V$, add two nodes, i.e., $\alpha$ and ${\alpha}^-$, to $\mathcal{T}$.  $\alpha$ is a leaf node of $\mathcal{T}$ and $T_\alpha = \{v\}$. ${\alpha}^-$ is the parent of $\alpha$ and $h({\alpha}^-) = 1$

\While{True}{
    $\mathcal{P} \gets (\alpha|\alpha \in \mathcal{T}, h(\alpha) = 1)$

    $\{\mathcal{P}_{g}\} \gets $ consecutively remove the first $min(n, \text{size of the remaining part of } \mathcal{P})$ clusters from $\mathcal{P}$ that form a set $\mathcal{P}_{g}$

    \For{$\mathcal{P}_{g} \in \{\mathcal{P}_{g}\}$}{// minimize the MrSE of one subgraph
    
        $V_{g} \gets$ all graph nodes $v \in V$ that are associated with the clusters in $\mathcal{P}_{g}$
        
        $E_{g}^{'} \gets \{e \in E', \text{both endpoints of }e \in V_{g}\}$
        
        ${G}_{g}^{'} \gets (V_{g}, E_{g}^{'}, R)$

        $\mathcal{T}_{g} \gets$ construct a new encoding tree that contains $\mathcal{P}_{g}$ and the leaf tree nodes of $\mathcal{T}$ that are associated with $\mathcal{P}_{g}$

        $\mathcal{T}_{g}^{'} \gets$ run 2D MrSE minimization (Algorithm \ref{algorithm:2D_MrSE}) on ${G}_{g}^{'}$, with the initial encoding tree set to $\mathcal{T}_{g}$

        $\mathcal{P}_{g}^{'} \gets (\alpha|\alpha \in \mathcal{T}_{g}^{'}, h(\alpha) = 1)$

        Update $\mathcal{T}$ with $\mathcal{P}_{g}^{'}$

    }

    \If{$|\{\mathcal{P}_{g}\}| = 1$}{
        Break
    }
    \If{$\mathcal{P}$ is the same as at the end of last iteration}{
        $n \gets 2n$
    }
}

\Return $\mathcal{T}$
\end{algorithm}

\section{Derivation of Equation (\ref{eq:merge_delta_MrSE})}
\label{app:derivation}
%%=======================
Based on Equation (\ref{eq:MrSE_kD_RS}), merging $\alpha_{o_1}$ and $\alpha_{o_2}$ into $\alpha_{n}$ only affects the MrSE values associated with $\alpha_{o_1}$, $\alpha_{o_2}$, $\alpha_{n}$, and their children. We denote the children of $\alpha_{o_1}$, $\alpha_{o_2}$, and $\alpha_{n}$ as $\Gamma_1 = \{\gamma|\gamma \in \mathcal{T}, \gamma^- = \alpha_{o_1}\}$, $\Gamma_2 = \{\gamma|\gamma \in \mathcal{T}, \gamma^- = \alpha_{o_2}\}$, and $\Gamma_3 = \{\gamma|\gamma \in \mathcal{T}, \gamma^- = \alpha_{n}\} = \Gamma_1 \cup \Gamma_2$, respectively. We have
\begin{equation}
\label{eq:merge_delta_MrSE_de1}
\begin{split}
& \Delta\text{MrSE}_{\alpha_{o_1}, \alpha_{o_2}} = \text{MrSE}_{new} -\text{MrSE}_{old} \\
& = - p'_{\rightarrow\alpha_n}\log{p'_{\alpha_n}} \underbrace{-\underset{\gamma \in \Gamma_3}{\sum}\textbf{x}'_{T_\gamma}\log\frac{\textbf{x}'_{T_\gamma}}{p'_{\alpha_n}}}_{\tcircle{1}}\\
& + p'_{\rightarrow\alpha_{o_1}}\log{p'_{\alpha_{o_1}}} \underbrace{+\underset{\gamma \in \Gamma_1}{\sum}\textbf{x}'_{T_\gamma}\log\frac{\textbf{x}'_{T_\gamma}}{p'_{\alpha_{o_1}}}}_{\tcircle{2}}\\
& + p'_{\rightarrow\alpha_{o_2}}\log{p'_{\alpha_{o_2}}} \underbrace{+\underset{\gamma \in \Gamma_2}{\sum}\textbf{x}'_{T_\gamma}\log\frac{\textbf{x}'_{T_\gamma}}{p'_{\alpha_{o_2}}}}_{\tcircle{3}}.\\
\end{split}
\end{equation}

Further, we have

\begin{equation}
\label{eq:merge_delta_MrSE_de2}
\begin{split}
\tcircle{1} + \tcircle{2} + \tcircle{3} & = - p'_{\alpha_{o_1}}\log\frac{p'_{\alpha_{o_1}}}{p'_{\alpha_n}} - p'_{\alpha_{o_2}}\log\frac{p'_{\alpha_{o_2}}}{p'_{\alpha_n}}.
\end{split}
\end{equation}

Plugging Equation (\ref{eq:merge_delta_MrSE_de2}) into Equation (\ref{eq:merge_delta_MrSE_de1}) concludes the derivation of Equation (\ref{eq:merge_delta_MrSE}).

%%=======================

\section{Node Clustering Experimental Setting}
\label{app:node_clustering_setting}
We adopt a consistent embedding dimension of 64 for all embedding-based methods. For SpectralNet, we adopt a three-layer architecture of [512, 256, 64]. For node2vec and metapath2vec, we set the walk length to 100, context size to 7, walks per node to 5, number of negative samples to 5, and number of workers to 6. For all methods based on deep learning, we configure the learning rate to be 0.001 and the number of training epochs to be 200, incorporating an early stopping mechanism with patience of 50 epochs. 
Given the high density of the ACM and DBLP datasets, we adopt hierarchical 2D minimization (Algorithm \ref{algorithm:hierarchical_2D_MrSE}) for MrSE. This approach is faster compared to the standard 2D minimization (Algorithm \ref{algorithm:2D_MrSE}) when applied to dense graphs. Similarly, for SE and RSSE, we apply the hierarchical 2D minimization proposed by \citet{cao2023hierarchical} instead of the vanilla 2D minimization in \citet{li2016structural} for the ACM and DBLP datasets. We set the sub-graph size $n$ to 800 and 100 for the ACM and DBLP datasets, respectively.

\section{Node Clustering ACC}
\label{app:node_clustering_acc}
Table \ref{table:node_clustering_acc} presents the node clustering ACC. Our proposed MrSE scores the highest on the ACM and DBLP datasets, outperforming strong baselines including those that leverage node features in addition to graph structure. The MrSE shows suboptimal performance when applied to the extremely sparse IMDB dataset, likely due to the loss of structural information resulting from the stochasticity adjustments. Furthermore, MrSE proves to be a more effective tool in deciphering the community structures of multi-relational graphs compared to SE. This is evident as MrSE outperforms SE on two of three datasets and performs comparably with SE on the third dataset. Meanwhile, some methods with relatively low NMI and ARI achieve high ACC. We believe that this is related to the setting of the expected number of clusters for these methods. Specifically, as discussed in Section \ref{sec:node_clustering}, for representation-learning models including node2vec,
metapath2vec, DGI, and DMGI, we perform k-means clustering after learning representations, setting the number of clusters to the ground truth to obtain final community predictions. We observed that the ACC scores of these methods are sensitive to changes in the number of clusters, while the NMI scores are relatively more stable. For instance, altering the number of clusters to 50 causes the ACC of DGI ($G'$, $X$) to drop from 71.57\% to 13.42\% and its NMI from 47.03\% to 34.74\%. Similarly, the ACC of metapath2vec decreases from 69.72\% to 18.94\%, and its NMI from 47.51\% to 29.13\%.

\begin{table}[t]
\caption{Node clustering ACC (\%). `/' indicates that SpectralNet fails to run on ACM.}
\centering
\scriptsize
\begin{adjustbox}{width=0.85\linewidth}
    \begin{tabular}{c|c|c|c|c}
    \toprule
    %\hline
    Method & Input & ACM & DBLP & IMDB\\
    %\hline
    \midrule
    k-means & $X$ & 31.97 & 28.52 & 28.94 \\
    %\hline
    \midrule
    DMoN & $G$, $X$ & 35.07 & 6.48  & 3.62 \\
    DGI & $G$, $X$ & 35.37 & 50.70 & \textbf{48.50} \\
    %\hline
    \midrule
    SpectralNet & $G$ & / & 11.98  & 19.10 \\
    node2vec & $G$ & 35.21 & 48.73  & 42.24 \\
    SE & $G$ & 44.29 & \underline{68.66} & 6.57 \\
    RSSE (ours) & $G$ & 44.33  & 68.05   & 6.59 \\
    %\hline
    \midrule
    DMoN & $G'$, $X$ & 16.73 & 17.18 & 9.16 \\
    DGI & $G'$, $X$ & \underline{71.57} & 54.47  & 35.25 \\
    DMGI & $G'$, $X$ & 55.64 & 31.23 & 37.82 \\
    %\hline
    \midrule
    SpectralNet & $G'$ & / & 45.45  & 5.80 \\
    metapath2vec & $G'$ & 69.72 & 66.84 & \underline{44.09} \\
    MrSE (ours) & $G'$ & \textbf{77.72} & \textbf{72.70} & 6.81 \\
    %\hline
    \bottomrule
    \end{tabular}
\end{adjustbox}
  \label{table:node_clustering_acc}
\end{table}

\section{Social Event Detection Data Statistics}
\label{app:social_event_data}
Table \ref{table:social_event_data} shows the statistics of the social event detection data. Given that all compared methods are unsupervised and exclusively utilize the test data, we limit our presentation of statistics to the test sets in Table \ref{table:social_event_data}.

% 2012 open
%semantic 0.9731927691307325
%user_id 0.9999476648042112
%user_mention 0.9984586925308646
%hashtags 0.9959854369305929
%entities 0.9616951418470556
%combine 0.9398181774866858

% 2012 close
%len(test_mask):  13769
%semantic 0.9933538776210731
%user_id 0.9999815478682059
%user_mentions 0.999943324841625
%hashtags 0.99908233085894
%entities 0.9837150283572024
%combine 0.977748184970668

% 2018 open
%semantic 0.9878605702182314
%user_name 0.9988066261183788
%user_mention 0.9945346389186838
%hashtags 0.980472136264378
%entities 0.9804642164519546
%combine 0.9476441235031433

% 2018 close
%num_nodes:  12902
%semantic 0.998656054404077
%user_name 0.9996309613757156
%user_mentions 0.9993643446976234
%hashtags 0.9956769623832813
%entities 0.995475903440809
%combine 0.9894792570986826
\begin{table}[t]
\caption{Statistics of the social event detection datasets. M, U, UM, H, N, and S denote message, sender, user mention, hashtag, named entity, and semantic, respectively. `combine' indicates the single-relational edges reduced from the multi-relational ones. The statistics are presented in terms of averages. Detailed data splits can be found in \citet{cao2023hierarchical}.}
\centering
\scriptsize
\begin{adjustbox}{width=0.95\linewidth}
    \begin{tabular}{c|ccccc}
    \toprule
    %\hline
    Dataset & $|G'|$ & $|V|$ & $R$ & Sparsity & $|Y|$ \\
    \midrule
    \multirow{6}{*}{\makecell{Event2012 \\ (avg.)}} & \multirow{6}{*}{21} & \multirow{6}{*}{2,314} & M-U-M & $>$99.99 & \multirow{6}{*}{37} \\
    &  & & M-UM-M & 99.85 &  \\
    &  & & M-H-M & 99.60 &  \\
    &  & & M-N-M & 96.17 &  \\
    &  & & M-S-M & 97.32 &  \\
    &  & & combine & 93.98 &  \\
    \midrule
    \multirow{6}{*}{\makecell{Event2018 \\ (avg.)}} & \multirow{6}{*}{16} & \multirow{6}{*}{3,137} & M-U-M & 99.88 & \multirow{6}{*}{25} \\
    &  & & M-UM-M & 99.45 &  \\
    &  & & M-H-M & 98.05 &  \\
    &  & & M-N-M & 98.05 &  \\
    &  & & M-S-M & 98.79 &  \\
    &  & & combine & 94.76 &  \\
    \bottomrule
    \end{tabular}
\end{adjustbox}
  \label{table:social_event_data}
\end{table}

\section{Multi-relational Message Graph Construction}
\label{app:social_event_graph}
We create multi-relational message graphs by distinguishing message correlations stemming from shared senders, mentioned users, hashtags, named entities, and similar natural language semantics. To achieve this, individual single-relational graphs are constructed for each correlation type. Additionally, a combined graph is formed by consolidating various correlation types into a unified representation. In this consolidation, multiple edges of different correlation types between the same pair of nodes are reduced into a single edge. Following this, all individual single-relational graphs are concatenated along the relation axis, completing the construction of a multi-relational message graph.

\end{document}